\newtheorem{theorem}{Theorem}[section]
\newtheorem{lemma}[theorem]{Lemma}
\newtheorem{example}[theorem]{Example}
\newtheorem{LA}{Lemma A }
\newtheorem{CA}{Corollary A}
\newtheorem{Them}{Theorem}[section]
\newtheorem{Prop}{Proposition}[section]
\newtheorem{Remark}{Remark}[section]
\begin{document}
\makeatletter
\def\@setauthors{%
\begingroup
\def\thanks{\protect\thanks@warning}%
\trivlist \centering\footnotesize \@topsep30\p@\relax
\advance\@topsep by -\baselineskip
\item\relax
\author@andify\authors
\def\\{\protect\linebreak}%
{\authors}%
\ifx\@empty\contribs \else ,\penalty-3 \space \@setcontribs
\@closetoccontribs \fi
\endtrivlist
\endgroup } \makeatother
 \baselineskip 17pt
\title[{{  Optimal  Control of Insurance Company   }}]
 {{\small Theoretical and numerical Analysis on Optimal
  dividend policy of an  insurance  company
  with positive transaction  cost  and  higher solvency   }}
\author[{\bf Z.Liang and J. Yao } ]
{ Zongxia Liang \\ Department of Mathematical Sciences, Tsinghua
University, Beijing 100084, China. Email:
zliang@math.tsinghua.edu.cn  \\Jicheng Yao \\ Department of
Mathematical Sciences, Tsinghua University, Beijing 100084, China.
Email: yaojicheng626@163.com}
\begin{abstract}
Based on a point of view that solvency and security are first, this
paper considers regular-singular stochastic optimal control problem
of a large insurance company facing positive transaction cost asked
by reinsurer under solvency constraint. The company controls
proportional reinsurance and dividend pay-out policy to maximize the
expected present value of the  dividend pay-outs until the  time of
bankruptcy. The paper aims at deriving  the optimal retention ratio,
dividend payout level, explicit value function of the insurance
company via stochastic analysis and PDE methods.  The results
present the best equilibrium point  between maximization of dividend
pay-outs and  minimization of risks. The paper also gets a
risk-based capital standard to ensure the capital requirement of can
cover the total given risk. We present numerical results to make
analysis how the model parameters, such as, volatility, premium
rate, and risk level, impact on risk-based capital standard, optimal
retention ratio, optimal dividend payout level and the company's
profit.  \vskip 10pt \noindent
 {\bf MSC}(2000): Primary
91B30,91B70,93E20; Secondary   60H30, 60H10. \vskip 10pt \noindent
 {\bf Keywords:} Regular-singular stochastic optimal control;
  Stochastic differential equations; Positive transaction cost;
 Dividend payout level and retention ratio; Optimal return function; Solvency.
 \end{abstract}
\maketitle \setcounter{equation}{0}
\section{{\small {\bf Introduction}}}
 \vskip 10pt \noindent
In this paper we consider a problem of risk control and dividend
optimization for a large insurance company facing positive
transaction cost asked by reinsurer( that is, the case of
excess-of-loss reinsurance). The company controls dividend stream
and its risk, as well as potential profit by choosing different
business activities among all of available policies to it. The
objective of the insurer is to choose proportional reinsurance
  and dividend level to maximize the expected present value of the
  dividend pay-outs until the  time of bankruptcy. This is a
regular-singular control problem of diffusion processes.  In the
view of optimization of the dividend pay-outs, the stochastic
optimal control problems of a large insurance company have been
given attention by many authors recently. We refer the readers to
Taksar and Zhou\cite{ime04}(1998), Choulli, Taksar and
Zhou\cite{CTZ}(2001), H{\o}jgaard and Taksar\cite{s3,s4}(1999,
2001), Asmussen et all\cite{s7,s8}(1997,2000), Guo, Liu and Zhou
\cite{s1}(2004), He and Liang\cite{ime01, ime03}(2008) and other
authors' works. According to classical economic theory, the approach
used in some of these papers is  the insurer selects  one from all
admissible business arrangements to yield maximization of expected
present value of dividend pay-outs.  However, Although this ideal
approach is the best in concept, it can't be used in practice
because the insurance business is a business {\sl affected with a
public interest}  and consumers should be protected against insurer
insolvencies (cf.Chapter 34, Williams and Heins\cite{law}(1985),
Riegel and Miller \cite{law2}(1963), Welson and Taylor
\cite{law3}(1958) ). Therefore, a policy making the company go
bankrupt before termination of contract between insurer and policy
holders or a policy of  low solvency(where {\sl solvency} means {\sl
$1-$ probability of bankruptcy}, cf.Bowers, Gerber et all
\cite{law1}(1997)) does not seem to be the best way and should be
prohibited even though it has the highest gain because under which
no claims will return to policy holders and contract can not be
forced to perform. On the other hand, this policy will also worsen
basis of insurance business survival and
 company's reputation which is the company's chief asset- a plant of long
growth but peculiarly susceptible to the cold winds of idle rumor.
So the higher standard of security and solvency is the first factor
to be taken into account for insurer. \vskip 10pt \noindent
Unfortunately, there are very few results concerning on  stochastic
optimal control problems of  insurance company
 from a view of security and solvency are first. Paulsen
\cite{pau}(2003) first studied this kinds of  optimal controls for
diffusions via properties of return function and then  He, Hou and
Liang\cite{ime02}(2008) investigated the optimal control problems
for linear Brownian model in case of cheaper reinsurance.
 By an innovative idea, based on a point of view that
security and solvency are first, in this paper we  will  establish a
sophisticated setting to effectively solve this kind of optimal
control on problems of a large insurance company in case of positive
transaction cost and solvency constraint. We aim at deriving  the
optimal retention ratio, dividend payout level, explicit value
function of the insurance company via stochastic analysis and PDE
methods. The model treated and approach used in the present paper
are different from those of \cite{pau}. In our approach, only
admissible policies satisfying this standard of security are
considered, so it will reduce the insurer's expected present value
of dividend pay-outs, on the other hand, it will increase security
and solvency in some sense by minimal loss. From this set of
admissible policies, the insurer can select one that allows the
highest expected present value of dividend pay-outs. Indeed, Our
results present the best place between gains and risks, the loss for
higher security and solvency is minimal. To get these results we
first study some properties of probability of bankruptcy by
stochastic analysis and PDE methods, then solve a generalized HJB
equations in appendix , finally we prove that solution of the HJB is
the optimal return function of the company. We find that the case
treated in the \cite{ime02}(2008) is a trivial case, that is, the
company of the model in the \cite{ime02}(2008) will never go to
bankruptcy, it is an ideal  model in concept, and it indeed does not
exist in reality.  Because  probability of bankruptcy for the model
treated in the present paper  is very large,  our results can not be
directly deduced from the \cite{ime02}(2008).
 \vskip 10pt \noindent
 The paper is organized as follows. In next section  we
 establish  mathematical model of a large insurance
company treated in this paper. In section 3 we present
  main result of this paper and its
economic interpretations. In section 4 we analyze solvency and
security of  stochastic mathematical model considered in this paper,
the results in this section also state that the solvency constraint
set $ \mathfrak{B}  $  in section 3 is not empty set nor $R_+$, so
the setting treated in this paper is well defined. In section 5 we
present numerical results studying how the model parameters impact
on the optimal return function and dividend policy. In section 6 we
list some lemmas of properties  of bankrupt probability, and their
rigorous proofs are presented in section 8. We give detailed proofs
of main results of this paper in section 7. Optimal return function
and its robustness properties w.r.t. dividend level are given in
appendix.
 \vskip 10pt \noindent
\setcounter{equation}{0}
\section{{\small {\bf  Mathematical model of a large insurance
company  }}}
 \vskip 10pt\noindent
 We start with a filtered probability space $(\Omega, \mathcal {F}, \{ \mathcal
{F}_{t}\}_{t\geq 0}, \mathbf{P})$ with a standard Brownian motion
$\{W_t\}_{t\geq 0}$ on it, adapted to the filtration $\mathcal
{F}_{t}$ satisfying the usual conditions. A pair of $\mathcal
{F}_{t}$ adapted processes $\pi =  \{ a_\pi(t), L^\pi_t\}$ is called
 a admissible policy if $0\leq a_\pi(t)\leq 1 $ and $L^\pi_t\ $ is a
 nonnegative,  non-decreasing, right-continuous with left limits.
  We denote by
$\Pi$ the whole set of  admissible policies. \vskip 10pt\noindent
Given an  admissible policy $ \pi $, if we denote by $R^\pi_{t}$ the
reserve of a large insurance company at time $t$ and by $L^\pi_t$
cumulative amount of dividends paid out to the shareholders up to
time $t$, then, by using the center limit theorem, we can assume
that (see \cite{CTZ,ime04,s7,C1, C2, C3, C4}) the dynamics of $R^\pi_{t}$ is given
by
\begin{eqnarray}\label{21}
dR^\pi_{t}=(\mu-(1-a_\pi(t))\lambda)dt+\sigma a_\pi(t)d
{W}_{t}-dL^\pi_{t}, \quad R^\pi_{0}=x,
\end{eqnarray}
where  $1-a(t)$ is the reinsurance  fraction at time $t$, the $
R^\pi_{0}=x$ means that the initial liquid reserve is $x$, the
constants $\mu$ and $\lambda$ can be regarded as the safety loadings
of the insurer and reinsurer, respectively. \vskip 10pt\noindent
Throughout this paper we assume that transaction cost $ \lambda-\mu
>0$. We refer readers to He, Hou and
Liang\cite{ime02}(2008) for $\lambda=\mu$. When the reserve
vanishes, we say that the company is bankrupt.  We define the time
of bankruptcy by $\tau ^\pi _x=\inf\big \{t\geq0: R^\pi_{t}= 0\big
\}$. Obviously, $\tau ^\pi _x$ is an $\mathcal {F}_{t}$ -stopping
time. For any $b\geq 0$, let $\Pi_b=\{\pi\in \Pi : \int _0 ^{ \infty
}I_{\{s:R^\pi(s)<b\}}dL_s^{\pi}=0\} $. It is easy to see that
$\Pi=\Pi_0$ and $b_1>b_2\Rightarrow \Pi_{b_1}\subset \Pi_{b_2}$.
\vskip 10pt\noindent
 For a given admissible policy $\pi$
we define  value function $V(x)$ of a large insurance company by
\begin{eqnarray}\label{f3}
J(x,\pi)&=&\mathbb{E}\big \{\int_0^{\tau^\pi_x } e^{-ct} dL_t^\pi\big\},\nonumber \\
 V(x,b )&=&\sup_{\pi \in \Pi_b}\{J(x,\pi)\},
\end{eqnarray}
\begin{eqnarray}\label{f4}
 V(x)&=&\sup_{b \in \mathfrak{B}}\{  V(x,b )  \}
\end{eqnarray}
where the solvency set $\mathfrak{B} $ defined by
  \begin{eqnarray*}
 \mathfrak{B}:=\big \{b\ :\ \mathbb{P}[\tau_{b}^{\pi_{b}} \leq T] \leq
\varepsilon \ , \  J(x, \pi_b)= V(x,b) \mbox{ and} \ \pi_b \in
\Pi_b\big\},
  \end{eqnarray*}
 $c>0$ is a discount rate, $\tau_b^{\pi_b}$ is
 the time of bankruptcy $\tau_x^{\pi_b} $  when
the initial asset $x=b$ and the control policy is $\pi_b$.
$1-\varepsilon$ is the standard of security and less than solvency
for given $\varepsilon>0 $( see \cite{law0, law1}). \vskip
10pt\noindent {The main purpose of this paper is to solve the
optimal control problems (\ref{f3}) and (\ref{f4}).} In addition to
finding optimal return function
 $V(x)$  of the company, we also derive  the optimal retention ratio,
dividend payout level $b^*$ and optimal policy $ \pi^*_{b^*}  $
associated with the $V(x)$ such that $J(x,\pi^*_{b^*})= V(x)$.
Moreover, their robustness properties w.r.t. model parameters are
presented via numerical results.
 \vskip 10pt \noindent
  \setcounter{equation}{0}
\section{{\small {\bf Main Results }} }
 \vskip 10pt \noindent
 In this section we first present main results of this paper, then,
 together with numerical
 results in section 5 below,  give
 economic and financial interpretations of the main results.
 The results present the best equilibrium point
 between benefits and risks. The proofs of main results
  will be given in section 7.
\vskip 5pt \noindent
\begin{Them}\label{theorem33}  Assume that
 transaction cost $\lambda-\mu>0$. Let  level of risk $ \varepsilon \in (0,
1)$ and time horizon $T$ be given. \vskip 5pt \noindent (i) If $
\mathbf{P}[\tau_{b_0}^{\pi^*_{b_0}}\leq T]\leq \varepsilon $,  then
the value function $ V(x) $ of the company  is $f(x)$ defined by
(\ref{32}) and (\ref{33}) in appendix, and $V(x)=V(x,
b_0)=J(x,\pi^*_{b_0})=V(x,0)=f(x)$.  The optimal policy associated
with $V(x) $ is $\pi_{b_o}^\ast=\{
 A^*_{b_0}(R^{\pi_{b_o}^\ast}_\cdot),L^{\pi_{b_o}^\ast}_\cdot\}$,
where  $(R^{\pi_{b_0}^\ast}_t,   L_t^{\pi_{b_0}^\ast  })$  is uniquely determined
by the
following SDE with reflection boundary(cf.\cite{s10100}):
\begin{eqnarray}\label{312}
\left\{
\begin{array}{l l l}
dR_t^{\pi_{b_o}^\ast}=(\mu-(1-A^*_{b_0}(R^{\pi_{b_o}^\ast}_t)
)\lambda)dt+\sigma A^*_{b_0}(R^{\pi_{b_o}^\ast}_t)d
{W}_{t}-dL_t^{\pi_{b_o}^\ast},\\
R_0^{\pi_{b_o}^\ast}=x,\\
0\leq R^{\pi_{b_o}^\ast}_t\leq  b_0,\\
\int^{\infty}_0 I_{\{t: R^{\pi_{b_o}^\ast}_t
<b_0\}}(t)dL_t^{\pi_{b_o}^\ast}=0
\end{array}\right.
\end{eqnarray}
and  $\tau_{x}^{\pi^*_{b_0}}=\inf\{t:R^{\pi_{b_0}^\ast}_t=0 \}$.
The optimal
dividend
 level is  $b_0$(see Lemma A.1 ),
  where $A^*(x)$ is defined by part (iii) of Lemma
 A.1 in appendix. The solvency of the company
is bigger than $1-\varepsilon$. \vskip 5pt \noindent (ii) If  $
\mathbf{P}[\tau_{b_0}^{\pi^*_{b_0}}\leq T]>\varepsilon $, then there
is a unique $ b^*> b_0$ satisfying
$\mathbf{P}[\tau_{b^*}^{\pi^*_{b^*}}\leq T]= \varepsilon $  such
that  $g(x)$  defined by (\ref{37}) and (\ref{38}) in appendix  is
the value function of the company, that is,
\begin{eqnarray}\label{313}
g(x)=\sup_{b\in \mathfrak{B}}\{V(x,b)\}=V(x, b^*)=J(x,\pi^*_{b^*})
\end{eqnarray}
 and
\begin{eqnarray}\label{314}
b^* \in \mathfrak{B},
\end{eqnarray}
where
\begin{eqnarray*}
\mathfrak{B}:=\big \{b: \mathbb{P}[\tau_{b}^{\pi_{b}} \leq T] \leq
\varepsilon, \  J(x, \pi_b)= V(x,b) \mbox{ and} \ \pi_b \in \Pi_b\
\big\}.
 \end{eqnarray*}
The optimal policy associated with $g(x)$ is
$\pi_{b^*}^\ast=\{
 A^*_{b^*}(R^{\pi_{b^*}^\ast}_\cdot),L^{\pi_{b^*}^\ast}_\cdot\}$,
 where $ ( R^{\pi_{b^*}^\ast}_\cdot, L^{\pi_{b^*}^\ast}_\cdot\}     )     $
is uniquely determined by  the following SDE with reflection boundary:
\begin{eqnarray}\label{315}
\left\{
\begin{array}{l l l}
dR_t^{\pi_{b^*}^\ast}=(\mu-(1- A^*_{b^*}(R^{\pi_{b^*}^\ast}_t)
)\lambda)dt+\sigma  A^*_{b^*}(R^{\pi_{b^*}^\ast}_t)d
{W}_{t}-dL_t^{\pi_{b^*}^\ast},\\
R_0^{\pi_{b^*}^\ast}=x,\\
0\leq R^{\pi_{b^*}^\ast}_t\leq  b^*,\\
\int^{\infty}_0 I_{\{t: R^{\pi_{b^*}^\ast}_t
<b^*\}}(t)dL_t^{\pi_{b^*}^\ast}=0
\end{array}\right.
\end{eqnarray}
and $\tau_{x}^{\pi^*_{b  }}=\inf\{t:R^{\pi_{b}^\ast}_t=0 \}$.
The optimal dividend
 level is  $b^*$, where $ A^*_{b^*}(x)$ is defined by part (iii) of Lemma
 A.2 in appendix. The optimal
 dividend  policy $\pi^*_{b^*}$ and
the optimal dividend $b^*$ ensure that the solvency of the company
is $1-\varepsilon$.
\vskip 5pt \noindent
(iii) Moreover,
\begin{eqnarray}\label{316}
 \frac{ g(x, b^*)  }{g(x,b_0)   }\leq 1.
 \end{eqnarray}
\end{Them}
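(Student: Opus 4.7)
The plan is to read the inequality as an immediate consequence of the monotonicity of the admissible policy classes $\Pi_b$ in $b$. Since the theorem places us in Case (ii), I may take as given that $b^* > b_0$ (as produced by part (ii)). Recalling from Section~2 that $b_1 > b_2 \Rightarrow \Pi_{b_1} \subset \Pi_{b_2}$, I apply this with $b_1 = b^*$ and $b_2 = b_0$ to obtain $\Pi_{b^*} \subset \Pi_{b_0}$. Taking the supremum of $J(x,\cdot)$ over these nested sets yields
\[
g(x, b^*) = V(x, b^*) = \sup_{\pi \in \Pi_{b^*}} J(x,\pi) \leq \sup_{\pi \in \Pi_{b_0}} J(x,\pi) = V(x, b_0) = g(x, b_0),
\]
using the identification $g(x,b) = V(x,b)$ supplied by the appendix (in particular $g(x) = g(x, b^*) = V(x, b^*)$ by~(\ref{313}) and $g(x, b_0) = f(x)$ by Lemma~A.1).

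To pass from this numerator bound to the claimed ratio bound, I would invoke the explicit formulas~(\ref{32})--(\ref{33}) for $f = V(\cdot, b_0)$, from which $g(x, b_0) = f(x) > 0$ for every $x > 0$. Dividing the display by $g(x, b_0)$ then delivers $g(x, b^*)/g(x, b_0) \leq 1$.

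The argument is essentially a one-line monotonicity statement for nested sup problems, so there is no serious obstacle. The only care needed is to match the notation $g(x, b)$ with $V(x, b)$, so that the inequality on value functions transfers to $g$, and to confirm strict positivity of $g(x, b_0)$ to legitimize the division; both are immediate from the appendix. Economically, the inequality expresses the price of the solvency constraint: $1 - g(x, b^*)/g(x, b_0)$ is the fraction of the unconstrained expected present value of dividends that the insurer surrenders in order to keep the bankruptcy probability bounded by $\varepsilon$.
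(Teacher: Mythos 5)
Your argument for part~(iii) is correct, but it follows a genuinely different route from the paper's. The paper cites Corollary~A.2, which asserts $\frac{\partial}{\partial b}g(b,x)\leq 0$ for $b\geq b_0$ --- an analytic monotonicity in $b$ that (per the appendix's remark) is established by differentiating the explicit formulas of Lemma~A.2 with respect to $b$. You instead obtain the two-point comparison $g(x,b^*)\leq g(x,b_0)$ purely variationally: since $b^*>b_0$, the Section~2 nesting gives $\Pi_{b^*}\subset\Pi_{b_0}$, so $V(x,b^*)=\sup_{\pi\in\Pi_{b^*}}J(x,\pi)\leq\sup_{\pi\in\Pi_{b_0}}J(x,\pi)=V(x,b_0)$, and the identifications $V(x,b^*)=g(x,b^*)$ (from~(\ref{313})/Theorem~\ref{theorem61}(ii)) and $V(x,b_0)=f(x)=g(x,b_0)$ (Theorem~\ref{theorem61}(i)) convert this to the claim. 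Your route requires no manipulation of the closed-form solutions, only the definitional monotonicity of nested suprema, which is cleaner and more robust; the paper's route via Corollary~A.2 gives strictly more information (a signed derivative in $b$), which the paper also uses to pin down $b^*=\inf\mathfrak{B}$ as the maximizer in~(\ref{313}). One small imprecision: you attribute $g(x,b_0)=f(x)$ to Lemma~A.1, but that identity really comes from Theorem~\ref{theorem61}(i) (Lemma~A.1 only defines $f$); this does not affect the validity of the argument. Your positivity check on the denominator (from~(\ref{32})--(\ref{33}), $f(x)>0$ for $x>0$) is the right care to take, and at $x=0$ the inequality $g(0,b^*)\leq g(0,b_0)$ holds trivially as $0\leq 0$.
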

\vskip 1cm\noindent
{\sl  We give economic and financial explanation of
Theorem \ref{theorem33} is as follows:}
 \vskip 10pt\noindent
{\sl  (1) For a given level of risk  and time horizon, if
probability of
 bankruptcy is less  than the level of risk, the optimal
  control problem of (\ref{f3}) and (\ref{f4}) is the traditional one,
the company has
higher solvency, so it will have  good reputation. The solvency
constraints here do not work.
 \vskip 5pt\noindent
 (2) If probability of bankruptcy is
large than the level of risk, the traditional optimal policy will
not meet the standard of security and solvency, the company needs to
find a sub-optimal policy $\pi_{b^*}^\ast $ to improve its solvency.
The sub-optimal reserve process $ R^{\pi_{b^*}^\ast}_t $ is a
diffusion process reflected at $b^*$, the process
$L^{\pi_{b^*}^\ast}_t  $ is the process which ensures the
reflection. The sub-optimal action is to pay out everything in
excess of $b^*$ as dividend and pay no dividend when the reserve is
below $b^*$, and $ A^*_{b^*}$ is the sub-optimal feedback control
function.\vskip 10pt\noindent
 (3) On the one hand, the inequality
(\ref{316}) states that $\pi_{b^*}^\ast $ will reduce the company's
profit, on the other hand, in the view of (\ref{316}),
$\mathbb{P}[\tau _{b^\ast}^{\pi_{b^*}^\ast}\leq T]= \varepsilon $ and
 Corollary A\ref{corollary32} below, the cost of improving solvency is
minimal. Therefore the policy $\pi_{b^*}^\ast $ is the best
equilibrium action between making profit and improving
solvency.\vskip 10pt\noindent (4) The under writing risk $\sigma^2$,
the premium rate $\mu $ and the initial capital $x$ will increase
the company's return, see the graphs \ref{difsigma} and \ref{difmu}
in section 5 below. \vskip 10pt\noindent (5) The risk-based capital
standard  $x$  decreases with the preferred risk level
$\varepsilon$, so  the higher preferred risk level $\varepsilon$
only needs  a lower  initial risk-based capital, see the graph
\ref{ep-x} below. \vskip 10pt\noindent (6)  The optimal dividend
level $ b(\varepsilon )$ is a decreasing function of the risk level
$ \varepsilon $, by comparison theorem of SDE, the optimal retention
ratio decreases with $ \varepsilon $, but the optimal dividend
process increases with $ \varepsilon $. Inversely, the risk level
$\varepsilon(b) $ is also
 a decreasing function of $b$ (see the graphs \ref{xinep-b}
and \ref{xinb-ep} below ).
 } \vskip 10pt\noindent
\begin{Remark}
Because the  \cite{ime02} had no continuity of probability of
bankruptcy and actual $b^*  $, the authors of \cite{ime02}  did not
obtain the best equilibrium policy $\pi_{b^*}^\ast $.
\end{Remark}
 \vskip 10pt \noindent
 \setcounter{equation}{0}
\section{{\small {\bf Analysis on the security and solvency of control model}}}
\vskip 10pt \noindent
 In this section we will give a quantitative analysis about
 the security and solvency of stochastic control
 model treated in  this paper.
  The main result of this section is Theorems \ref{theorem41}
  and \ref{lemma51} below. They
 reveal that for any given $T>0 $ low dividend $b_0$ will raise
level  of risk $\varepsilon_0$, the company does have higher level
of risk before the contract between insurer and policy holder goes
into effect(i.e., $T $ is less than the time of the contract issue
and positive), the company's solvency is less than
$1-\varepsilon_0$, so the company has to find an optimal dividend
policy that improves the ability of the insurer to fulfill its
obligation to policy holders under higher standard of security and
solvency. On the other hand,  the solvency constraint set $
\mathfrak{B} $ in section 3 is not empty set nor $R_+$,  the setting
treated in this paper is well defined.
\begin{Them}\label{theorem41} Assume that $\lambda > \mu $
and define process $( R^{\pi_{b_0}^\ast,x}_t, L^{\pi_{b_0}^\ast}_t)$
 by the
following SDE:
\begin{eqnarray}\label{41}
\left\{
\begin{array}{l l l}
dR_t^{\pi_{b_o}^\ast,x}=(\mu-(1-A^*_{b_0}(R^{\pi_{b_o}^\ast,x}_t)
)\lambda)dt+\sigma A^*_{b_0}(R^{\pi_{b_o}^\ast,x}_t)d
{W}_{t}-dL_t^{\pi_{b_o}^\ast},\\
R_0^{\pi_{b_o}^\ast,x}=x,\\
0\leq R^{\pi_{b_o}^\ast, x}_t\leq  b_0,\\
\int^{\infty}_0 I_{\{t: R^{\pi_{b_o}^\ast,x}_t
<b_0\}}(t)dL_t^{\pi_{b_o}^\ast}=0.
\end{array}\right.
\end{eqnarray}
Then for any $0< x\leq b_0$ there exists $\varepsilon_0 >0 $ such
that
\begin{eqnarray}\label{42}
\mathbf{P}\{\tau_{x}^{\pi_{b_0}^\ast} \leq T\}\geq \varepsilon_0>0,
\end{eqnarray}
where $\varepsilon_0= \min\big \{ \frac{
4[1-\Phi(\frac{x}{d\sigma\sqrt{T}})]^2     }{
\exp\{\frac{2}{\sigma^2}( \lambda^2 +\delta^2 )T\} },
\frac{x}{\sqrt{2\pi}\sigma}\int^T_0t^{-\frac{3}{2}}\exp\{-\frac{(x+\mu
t)^2}{2\sigma^2 t}\}dt          \big \}.$
\end{Them}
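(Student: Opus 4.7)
The plan is to establish the two bounds appearing inside the $\min$ separately, then combine them. The starting observation is that the upper reflection at $b_0$ only pushes the reserve \emph{downward}: since $L_t^{\pi_{b_0}^\ast}$ is non-decreasing and is \emph{subtracted} in \eqref{41}, the process $\widetilde R_t$ obtained by solving the same SDE without the $-dL_t$ term satisfies $R_t^{\pi_{b_0}^\ast,x}\le \widetilde R_t$ pathwise. Consequently the first hitting time of $0$ by $R^{\pi_{b_0}^\ast,x}$ is at most that of $\widetilde R$, and it suffices to bound $\mathbf{P}\{\widetilde\tau_x\le T\}$ from below. This frees me from dealing with the singular local-time part and reduces everything to a standard (controlled) diffusion with bounded coefficients.

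For the second expression in the minimum I would compare $\widetilde R_t$ with the auxiliary drifted Brownian motion $Y_t = x+\mu t+\sigma W_t$. Because $A^*_{b_0}\in[0,1]$ and $\lambda-\mu>0$, the drift $\mu-(1-A^*)\lambda$ is dominated by $\mu$, and the diffusion $\sigma A^*$ is dominated by $\sigma$. Using the time change for continuous martingales together with a Girsanov step to align the drifts, I would show that the ruin probability of $\widetilde R$ dominates that of $Y$. The density of $\tau_Y=\inf\{t:Y_t=0\}$ is the classical inverse-Gaussian density $\frac{x}{\sigma\sqrt{2\pi t^3}}\exp\{-(x+\mu t)^2/(2\sigma^2 t)\}$; integrating it over $[0,T]$ yields exactly the second term of $\varepsilon_0$.

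For the first expression I would use a Girsanov-type change of measure to remove the drift of $\widetilde R$, so that under the new measure $\widetilde R$ is a continuous martingale with quadratic variation bounded by $\sigma^2 d^2 T$ for an explicit $d$ (coming from the bounds on $A^*_{b_0}$ in Lemma~A.1). Dambis--Dubins--Schwarz then realises it as a time-changed Brownian motion, and the reflection principle gives a lower bound of order $2[1-\Phi(x/(d\sigma\sqrt{T}))]$ for hitting $0$. Transporting this bound back to the original measure by Cauchy--Schwarz applied to the Radon--Nikodym derivative produces the denominator $\exp\{2(\lambda^2+\delta^2)T/\sigma^2\}$, with $\delta$ absorbing the bound on $|\mu-(1-A^*)\lambda|$; the square and the factor $4$ come precisely from Cauchy--Schwarz. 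Taking the minimum of the two independently proved bounds gives the stated $\varepsilon_0>0$.

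The main obstacle I anticipate is that pathwise SDE comparison (Yamada--Watanabe style) is unavailable here because the diffusion coefficient $\sigma A^*_{b_0}(R_t)$ is genuinely state-dependent, so I cannot simply replace $A^*$ by a constant inside the SDE. My workaround is to compare only at the level of ruin probabilities, transporting the problem into the canonical Brownian setting via time change plus Girsanov, where explicit hitting-time estimates are available. Identifying the precise values of $d$ and $\delta$ in the statement will hinge on the qualitative bounds on $A^*_{b_0}$ provided by Lemma~A.1 in the appendix.
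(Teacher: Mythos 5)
Your treatment of the first term in $\varepsilon_0$ is essentially the paper's: remove $-dL_t$ by a pathwise comparison (the reflected process is dominated by the unreflected one), apply Girsanov to kill the drift, use the lower bound $A^*_{b_0}\ge d$ to control the Dambis--Dubins--Schwarz time change, invoke the reflection principle, and transport the bound back to $\mathbf{P}$ by Cauchy--Schwarz on $M_T$, estimating $\mathbf{E}[M_T^2]$ via Novikov-type control. That part is sound and matches the paper.

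The gap is in how you read the $\min$ and in your argument for the second term. In the paper, the two expressions in $\varepsilon_0$ do \emph{not} come from two independent estimates of the same ruin probability; they come from a case split on $\lambda$. When $\mu<\lambda<2\mu$ the Girsanov/time-change route gives the first term. When $\lambda\geq 2\mu$, Lemma~A.1(iii) gives $A^*_{b_0}\equiv 1$, so the SDE with $-dL_t$ removed is \emph{exactly} $x+\mu t+\sigma W_t$ (no domination, an identity), and integrating the inverse-Gaussian first-passage density gives the second term. The $\min$ is just a uniform lower bound covering both regimes. Your proposed derivation of the second term, by contrast, claims a comparison ``ruin probability of $\widetilde R$ dominates that of $Y_t=x+\mu t+\sigma W_t$'' on the grounds that both the drift and the volatility of $\widetilde R$ are dominated by those of $Y$. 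This does not follow: lowering the drift makes ruin more likely, but lowering the volatility makes ruin by a fixed deadline $T$ generically \emph{less} likely, so the two dominations pull in opposite directions. Your suggested fix (``time change plus a Girsanov step to align the drifts'') is not spelled out and does not obviously repair this --- indeed the time change induced by $\sigma A^*_{b_0}\le\sigma$ slows down the clock, which cuts the wrong way for a lower bound on $\mathbf{P}\{\text{ruin before }T\}$. The observation you are missing is that the second term is only needed in the regime $\lambda\geq 2\mu$, where $A^*_{b_0}\equiv 1$ and no stochastic comparison is required beyond discarding the reflection term.
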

\begin{proof}  We first consider the case of $\mu < \lambda < 2\mu $.
Denote by  $\delta $ the $\lambda-\mu$ and define new process
$R_t^{(1),x}$ by
\begin{eqnarray}\label{43}
\left\{\begin{array}{l l l} d
R_t^{(1),x}=(\mu-(1-A^*_{b_0}(R_t^{(1),x}) )\lambda)dt+\sigma
A^*_{b_0}(R_t^{(1),x})d {W}_{t},\\
\ R_0^{(1),x}=x.
\end{array}\right.
\end{eqnarray}
By using comparison theorem on SDE(see Ikeda and Watanabe
\cite{s107}and \cite{IW}),
\begin{eqnarray}\label{44}
 \mathbf{P}\{ R_t^{\pi_{b_o}^\ast,x}\leq R_t^{(1),x}
\}=1.
\end{eqnarray}
Define a measure  $\mathbf{Q}$ on $\mathcal{F}_T $ by
\begin{eqnarray}\label{45}
d\mathbf{Q}(\omega)=M_T(\omega)d\mathbf{P}(\omega),
\end{eqnarray}
where
\begin{eqnarray*}
M_{t}&=&\exp\big\{-\int_{0}^{t}\frac{(\lambda
A_{b_0}^*(R_t^{(1),x})-\delta)}{\sigma
A_{b_0}^*(R_t^{(1),x})}dW_{s}- \frac{1}{2}\int_{0}^{t}\frac{(\lambda
A_{b_0}^*(R_t^{(1),x})-\delta)^{2}}{[\sigma
A_{b_0}^*(R_t^{(1),x})]^{2}}ds\big
\}\\
&:=& \mathcal{E}(N)_t=\exp\{N_t-\frac{1}{2}<N>_t\},
\end{eqnarray*}
$N_t= -\int_{0}^{t}\frac{(\lambda
A_{b_0}^*(R_t^{(1),x})-\delta)}{\sigma A_{b_0}^*(R_t^{(1),x})}dW_{s}
$ and $<N>$ is its  bracket. \vskip 10pt \noindent
 By Corollary A.1 in appendix,  $\{M_t\}$ is an exponential martingale
w.r.t.$\mathcal{F}_t$. So by Girsanov theorem, $\mathbf{Q}$ is a
probability measure on $\mathcal {F}_T$ and $ R_t^{(1),x}   $
satisfies  the following SDE:
\begin{eqnarray}\label{46}
dR_t^{(1),x}=A_{b_0}^*(R_t^{(1),x})\sigma \tilde{W}_t,\
R_0^{(1),x}=x,
\end{eqnarray}
where $\tilde{W}_t$ is a standard Brownian motion w.r.t $\bf{Q}$.
\vskip 10pt \noindent By Corollary A.1 in appendix, we can define a
time-change $\rho(t)$ and a processes $ \hat{R}_t^{(1),x} $ by
\begin{eqnarray}\label{47}
\dot{\rho}(t)=\frac{1}{{A_{b_0}^*}^2(R_t^{(1),x})\sigma^2},
\end{eqnarray}
and
$$\hat{R}_t^{(1),x}=R_{\rho(t)}^{(1),x}, $$ respectively. Then $\rho(t)$ is a
strictly increasing  w.r.t. $t$ and (\ref{46}) becomes
\begin{eqnarray*}
\hat{R}_t^{(1),x} =x+\hat{W}_t,
\end{eqnarray*}
where $\hat{W}_t$ is a standard Brownian motion w.r.t $\mathbf{Q}$.
Moreover, by the part (ii) of Corollary A.1 in appendix, we know
that for $t\geq 0$
\begin{eqnarray}\label{48}
\dot{\rho}(t)&=&\frac{1}{{A_{b_0}^*}^2(R_t^{(1),x})\sigma^2}\leq
\frac{1}{d^2\sigma^2}.
\end{eqnarray}
So $\rho(t)\leq \frac{1}{d^2\sigma^2} t$ and $\rho^{-1}(t) \geq
d^2\sigma^2 t$. Therefore
\begin{eqnarray}\label{49}
{\bf{Q}}[\inf\{t: R_t^{(1),x} \leq 0\} \leq T]&=&{\bf{Q}}[\inf\{t:
\hat{R}_{\rho^{-1}(t)}^{(1),x}\leq 0\}\leq
T]\nonumber\\&=&{\bf{Q}}[\inf\{\rho(t): x+\hat{W}_t\leq 0\}\leq T]\nonumber\\
&=&{\bf{Q}}[\inf \{t: \hat{W}_t\leq - x \}\leq
\rho^{-1}(T)]\nonumber\\&\geq& {\bf{Q}}[\inf\{t: \hat{W}_t\leq
-x\}\leq d^2\sigma^2
T]\nonumber\\
&=&2[1-\Phi(\frac{x}{d\sigma\sqrt{T}})]>0,
\end{eqnarray}
where $\Phi(\cdot)$ is the standard normal distribution function. By
(\ref{45}),  we have
\begin{eqnarray}\label{410}
\mathbf{Q}[\inf\{t:R_t^{(1),x}\leq 0\}\leq T]&=&\int {I}_{[\inf\{t:
R_t^{(1),x}\leq 0\}\leq T]}d\mathbf{Q}(\omega)\nonumber\\&=&\int
{I}_{[\inf\{t: R_t^{(1),x}\leq 0\}\leq
T]}M_Td\mathbf{P}(\omega)\nonumber\\&=&\mathbf{E}
[M_T{I}_{[\inf\{t:R_t^{(1),x}\leq 0\}\leq
T]}]\nonumber\\
&\leq&\mathbf{E}[M_T^2]^{\frac{1}{2}}\mathbf{P}[\inf\{t:R_t^{(1),x}\leq
0\}\leq T]^{\frac{1}{2}}.\nonumber\\
\end{eqnarray}
By using Corollary A.1 in appendix again,
\begin{eqnarray}\label{411}
  \mathbf{E}[M_T^2]& =&\mathbf{E} \{\mathcal{E}(2N)_T
\exp\{<N>_T\} \}\nonumber\\&\leq & \exp\{\frac{2}{\sigma^2}(
\lambda^2 +\delta^2
)T\}\mathbf{E}\{\mathcal{E}(2N)_T\}\nonumber\\
&=& \exp\{\frac{2}{\sigma^2}( \lambda^2 +\delta^2 )T\}.
\end{eqnarray}
We deduce from (\ref{49}), (\ref{410}) and (\ref{411}) that
\begin{eqnarray}\label{412}
\mathbf{P}[\inf\{t:R_t^{(1),x}\leq 0\}\leq T]\geq \frac{
4[1-\Phi(\frac{x}{d\sigma\sqrt{T}})]^2     }{
\exp\{\frac{2}{\sigma^2}( \lambda^2 +\delta^2 )T\}  },
\end{eqnarray}
which, together with (\ref{44}) and (\ref{412}), implies that
\begin{eqnarray*}
\mathbf{P}\{\tau_{x}^{\pi_{b_0}^\ast} \leq T    \}\geq \frac{
4[1-\Phi(\frac{x}{d\sigma\sqrt{T}})]^2     }{
\exp\{\frac{2}{\sigma^2}( \lambda^2 +\delta^2 )T\} }.
\end{eqnarray*}
Next we consider the case of $ \lambda \geq 2\mu $. Since
$A_{b_0}^*(x)=1$ for any $x\geq 0$, we have (cf.\cite{s110})
\begin{eqnarray*}\label{413}
\mathbf{P}\{\tau_{x}^{\pi_{b_0}^\ast} \leq T    \} &\geq &\mathbf{P}
\{\inf\{t: \mu t + \sigma W_t= -x \}\leq T      \}\nonumber\\
&=&\frac{x}{\sqrt{2\pi}\sigma}\int^T_0t^{-\frac{3}{2}}\exp\{-\frac{(x+\mu
t)^2}{2\sigma^2 t}\}dt>0.
\end{eqnarray*}
Thus if let  $$\varepsilon_0 = \min\big \{ \frac{
4[1-\Phi(\frac{x}{d\sigma\sqrt{T}})]^2     }{
\exp\{\frac{2}{\sigma^2}( \lambda^2 +\delta^2 )T\} },
\frac{x}{\sqrt{2\pi}\sigma}\int^T_0t^{-\frac{3}{2}}\exp\{-\frac{(x+\mu
t)^2}{2\sigma^2 t}\}dt          \big \},$$ then the proof follows.
\end{proof}
\vskip 10pt \noindent
\begin{Them}\label{lemma51}
 Assume that $\delta=\lambda-\mu>0$
 and define $( R^{\pi_{b}^\ast,b}_t,L_t^{\pi_{b}^\ast})$ by the
following SDE:
\begin{eqnarray}\label{414}
\left\{
\begin{array}{l l l}
dR_t^{\pi_{b}^\ast,b}=(\mu-(1-A^*_{b}(R^{\pi_{b}^\ast,b}_t)
)\lambda)dt+\sigma A^*_{b}(R^{\pi_{b}^\ast,b}_t)d
{W}_{t}-dL_t^{\pi_{b}^\ast},\\
R_0^{\pi_{b}^\ast,b}=b,\\
0\leq R^{\pi_{b}^\ast, b}_t\leq  b,\\
\int^{\infty}_0 I_{\{t: R^{\pi_{b}^\ast,b}_t
<b\}}(t)dL_t^{\pi_{b}^\ast}=0.
\end{array}\right.
\end{eqnarray}
Then
\begin{eqnarray}\label{415}
\lim_{b\rightarrow  \infty}\mathbf{P}[ \tau_{b}^{\pi^*_b}\leq
T]=0,\end{eqnarray} where $ \tau_{b}^{\pi^*_b}=\inf\{t:
R^{\pi_{b}^\ast, b}_t=0\}$.
\end{Them}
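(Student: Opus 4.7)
The plan is to reduce everything to controlling the supremum of a continuous semimartingale with uniformly bounded coefficients, for which the estimate is of order $1/b^{2}$.

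First, I would pass to the non-negative process $Y_{t}=b-R_{t}^{\pi_{b}^{\ast},b}$, which starts at $0$, stays in $[0,b]$, and satisfies $\tau_{b}^{\pi_{b}^{\ast}}=\inf\{t:Y_{t}=b\}$. Writing out the SDE (\ref{414}) gives
\begin{equation*}
Y_{t}=Z_{t}+L_{t}^{\pi_{b}^{\ast}},\qquad Z_{t}:=-\int_{0}^{t}\bigl(\mu-(1-A_{b}^{\ast}(R_{s}))\lambda\bigr)ds-\int_{0}^{t}\sigma A_{b}^{\ast}(R_{s})dW_{s},
\end{equation*}
with $Z_{0}=0$. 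Since $L^{\pi_{b}^{\ast}}$ is continuous, non-decreasing, null at $0$, and increases only on $\{R_{t}=b\}=\{Y_{t}=0\}$, the pair $(Y,L^{\pi_{b}^{\ast}})$ is exactly the Skorohod reflection of $Z$ at $0$. Skorohod's explicit formula then yields $L_{t}^{\pi_{b}^{\ast}}=\sup_{s\leq t}(-Z_{s})^{+}$, and hence
\begin{equation*}
\sup_{t\leq T}Y_{t}\leq 2\sup_{t\leq T}|Z_{t}|.
\end{equation*}

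Next I would exploit that the feedback satisfies $A_{b}^{\ast}(\cdot)\in[0,1]$, which implies $|\mu-(1-A_{b}^{\ast})\lambda|\leq\lambda$ and $\sigma A_{b}^{\ast}\leq\sigma$ uniformly in $b$. Thus $|Z_{t}|\leq\lambda T+|M_{t}|$ where $M_{t}:=\int_{0}^{t}\sigma A_{b}^{\ast}(R_{s})dW_{s}$ is an $L^{2}$-martingale with $\langle M\rangle_{T}\leq\sigma^{2}T$. By Doob's inequality, $\mathbf{E}[\sup_{t\leq T}M_{t}^{2}]\leq 4\sigma^{2}T$.

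Combining the two reductions, for every $b>2\lambda T$ Chebyshev's inequality gives
\begin{equation*}
\mathbf{P}[\tau_{b}^{\pi_{b}^{\ast}}\leq T]=\mathbf{P}\bigl[\sup_{t\leq T}Y_{t}\geq b\bigr]\leq\mathbf{P}\bigl[\sup_{t\leq T}|M_{t}|\geq b/2-\lambda T\bigr]\leq\frac{16\sigma^{2}T}{(b-2\lambda T)^{2}},
\end{equation*}
which tends to $0$ as $b\to\infty$.

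The main obstacle I expect is the clean handling of the reflection term $L^{\pi_{b}^{\ast}}$: one must recognize that the reflection acts only at the top barrier $b$ and can therefore be absorbed by the Skorohod representation for $Y$, so that the dividend process never needs to be estimated directly. Everything after this step is a uniform-in-$b$ bound using only that $A_{b}^{\ast}$ is valued in $[0,1]$.
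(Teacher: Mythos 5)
Your proof is correct, and it takes a genuinely different route from the paper's. The paper first uses a comparison to reduce to the process started at $(b+m)/2$, then passes to the unreflected process $R^{(1)}$ and splits the ruin event into $\{\sup R^{(1)}\geq b\}\cup\{\inf R^{(1)}\leq m\}$; the first piece is handled by a Girsanov change of measure, Cauchy--Schwarz on the Radon--Nikodym density, Chebyshev and the Burkholder--Davis--Gundy inequality (giving an $O\bigl((b-m)^{-1}\bigr)$ bound), while the second uses the fact that $A_b^*\equiv 1$ above the level $m$ so that the process is a Brownian motion with drift there, whose hitting probability is computed explicitly. Your argument instead treats the reflection directly: recognizing $(Y,L^{\pi_b^*})$ with $Y=b-R$ as the Skorohod reflection of the free semimartingale $Z$ at $0$, so that $L_t^{\pi_b^*}=\sup_{s\leq t}(-Z_s)^+$ and $\sup_{t\leq T\wedge\tau}Y_t\leq 2\sup_{t\leq T\wedge\tau}|Z_t|$, and then using only the uniform bounds $0\leq A_b^*\leq 1$ together with Doob's $L^2$ maximal inequality and Chebyshev. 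This is shorter, avoids Girsanov entirely, needs no case split between $\lambda\geq 2\mu$ and $\mu<\lambda<2\mu$, and yields the stronger quantitative rate $O(b^{-2})$ uniformly in $b$. One small point worth making explicit in a write-up: the Skorohod identity should be applied on $[0,\tau_b^{\pi_b^*})$, where the single-barrier reflection at $b$ is unambiguous, and then $\{\tau_b^{\pi_b^*}\leq T\}\subset\{\sup_{t\leq T}|Z_{t\wedge\tau}|\geq b/2\}$ follows by continuity at $\tau$; with $M_{t\wedge\tau}$ in place of $M_t$ the Doob bound $\mathbf{E}[\sup_{t\leq T}M_{t\wedge\tau}^2]\leq 4\sigma^2T$ is unchanged.
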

\begin{proof} We only need to prove Theorem \ref{lemma51}
in case of $ \mu< \lambda < 2\mu$ because other case can be treated
similarly.\\
For large $b> m $, by the same way as in proving Theorem 3.1 of
\cite{ime02}, we have
\begin{eqnarray*}
{\bf P}\{\tau_{b}^{\pi^*_b}\leq T\}\leq {\bf
P}\{\tau_{(b+m)/2}^{\pi^*_b}\leq T\}.
\end{eqnarray*}
It easily follows that
\begin{eqnarray*}
\mathbf{P}\{\tau_{(b+m)/2}^{\pi^*_b}\leq T\}
&\leq&\mathbf{P}\{R_t^{(1)}=m \ \mbox{or}\ R_t^{(1)}
=b\ \mbox{for some $t\geq 0$ }\}\nonumber\\
&\leq& \mathbf{P}\{\sup_{0\leq t \leq T} R_t^{(1)}\geq
b\}+\mathbf{P}\{\inf_{0\leq t \leq T} R_t^{(1)}\leq m\},
\end{eqnarray*}
where  the process $ \{R_t^{(1)}\}  $ satisfies the following
stochastic differential equation,
\begin{eqnarray}\label{416}
\left\{
\begin{array}{l l l}
dR^{(1)}_{t}=(\mu-(1-A^*_{b}(R^{(1)}_{t}) )\lambda)dt+\sigma
A^*_{b}(R^{(1)}_{t})d{W}_{t},\\
 R^{(1)}_{0}=(b+m)/2.
 \end{array}\right.
\end{eqnarray}
Define measure  $\mathbb{Q}_1$ on $\mathcal{F}_T $ by
\begin{eqnarray}\label{417}
d\mathbf{P}(\omega)=\widetilde{M}_T(\omega)d \mathbb{Q}_1(\omega),
\end{eqnarray}
where
\begin{eqnarray*}
\widetilde{M}_{t}&=&\exp\big\{\int_{0}^{t}\frac{(\lambda
A_{b}^*(R_t^{(1)})-\delta)}{\sigma A_{b}^*(R_t^{(1)})}dW_{s}+
\frac{1}{2}\int_{0}^{t}\frac{(\lambda
A_{b}^*(R_t^{(1)})-\delta)^{2}}{[\sigma
A_{b}^*(R_t^{(1)})]^{2}}ds\big \}.
\end{eqnarray*}
By Corollary A.1 in appendix, $\{\widetilde{M}_{t}\}$ is an
exponential martingale. So by Girsanov theorem, $\mathbb{Q}_1$ is a
probability measure on $\mathcal {F}_{T}$ and the process
$\hat{\mathcal{W}_{t}}:=\int_{0}^{t}\frac{(\lambda
A_{b}^*(R_s^{(1)})-\delta)}{\sigma A_{b}^*(R_s^{(1)})}ds+{W}_{t}$ is
a Brownian motion w.r.t.$\mathbb{Q}_1$, as well as the
SDE(\ref{416}) becomes
$$dR_t^{(1)}=\sigma
A^*_{b}(R^{(1)}_{t})\hat{\mathcal{W}_{t}}, \ R^{(1)}_{0}=(b+m)/2,\
\mbox{a.e.}, \  \mathbb{Q}_1
$$
\vskip 10pt \noindent
 Firstly, we estimate the  term  $
\mathbf{P}\{\sup_{0\leq t \leq T} R_t^{(1)}\geq b\}$. \vskip 10pt
\noindent By (\ref{417}), H\"{o}lder's inequality, Chebyshev
inequality and B-D-G inequalities (see Ikeda and Watanabe
\cite{s107}(1981)),  we have
\begin{eqnarray}\label{418}
 \mathbf{P}\{\sup_{0\leq t \leq
T} R_t^{(1)}\geq b\}&\leq & [{\bf E}^{\mathbb{Q}_1}\{
\widetilde{M}^2_T \}]^{\frac{1}{2}} \mathbb{Q}_1\{ \sup_{0\leq t
\leq T} R_t^{(1)}\geq b    \}^{\frac{1}{2}},
\end{eqnarray}
\begin{eqnarray}  \label{419}
\mathbb{Q}_1\{\sup_{0\leq t\leq T}R_t^{(1)}\geq b \} &\leq
&{\mathbb{Q}_1}\{\sup_{0\leq t\leq T}|\int_0^t \sigma A^*_{b}
(R^{(1)}_{s})d\hat{\mathcal{W}_{s}} |\geq \frac{b-m}{2}\}\nonumber\\
&\leq &\frac{ 4{\bf E}^{\mathbb{Q}_1} \{ \sup_{0\leq t \leq
T}|\int_0^t \sigma A^*_{b}(R^{(1)}_{s})
d\hat{\mathcal{W}_{s}}|    \}^2}{(b-m)^2 }\nonumber\\
&\leq  & \frac{ 16{\bf E}^{\mathbb{Q}_1} \{ \int_0^T ( \sigma
A^*_{b} (R^{(1)}_{s}))^2
 ds|    \}}{(b-m)^2 }\nonumber\\
 &\leq & \frac{16T\sigma^2 \widetilde{B}^2}{(b-m)^2},
\end{eqnarray}
where ${\bf E}^{\mathbb{Q}_1}$ denotes mathematical expectation with
 respect to the probability measure $\mathbb{Q}_1$.
\vskip 10pt \noindent Secondly, we estimate the term $P\{\inf_{0\leq
t \leq T} R_t^{(1)}\leq m\}       $ as follows. \vskip 10pt
\noindent Noting that $A^*_{b} (x)=1$ for $x\geq m$, we have
\begin{eqnarray}\label{420}
\mathbf{P}\{\inf_{0\leq t \leq T} R_t^{(1)}\leq
m\}&=&1-\mathbf{P}\{\inf_{0\leq t \leq
T} R_t^{(1)}> m\}\nonumber \\
&=&1- \mathbf{P}\{\inf_{0\leq t\leq T}\{\mu t +\sigma
W_t\}>-\frac{b-m}{2}
\}\nonumber\\
&&\rightarrow 1-1=0 \ \mbox{as} \ b\rightarrow \infty.
\end{eqnarray}
By the same way as in the proof of (\ref{411}),
\begin{eqnarray}\label{421}
{\bf E}^{\mathbb{Q}_1}\{ \widetilde{M}^2_T \}\leq C(T)< \infty.
\end{eqnarray}
Therefore, the equality (\ref{415}) easily follows from the
inequalities (\ref{418})-(\ref{421}).
 \end{proof}
 \vskip 15pt \noindent
\setcounter{equation}{0}
\section{\bf Numerical analysis}
\vskip 5pt \noindent In this section we present  numerical results
to demonstrate how the volatility $\sigma^2$, the premium rate $\mu
$ and the initial capital $x$ impact on the company's safety and
profit and how the risk $\varepsilon$ effect on risk-based capital
standard $x$, optimal retention ratio, optimal dividend payout
level, optimal control policy and the company's profit. Inversely,
we also explain how the risk $\varepsilon$ impacts on optimal
dividend payout level  based on PDE (\ref{422}), the probability of
bankruptcy and value function below. \noindent
\begin{example}
The graphs \ref{difsigma} and \ref{difmu} below show that the value
$g(x)$ increases with $(x, \mu, \sigma^2 )$, so higher the
volatility $\sigma^2$, the premium rate $\mu $ and the initial
capital $x$ will make the company get  more return.\noindent
\begin{figure}[H]
\resizebox{7cm}{7cm}{\includegraphics{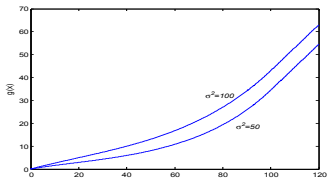}} \caption{Value
$g(x)$ as a function of $ x$ in cases $\sigma^2=50 $ and $
\sigma^2=100$, respectively (parameters: $\mu=2$; $\lambda=6$;
$c=0.05$; $b=100$)}\label{difsigma}
\end{figure}
\begin{figure}[H]
\resizebox{7cm}{7cm}{\includegraphics{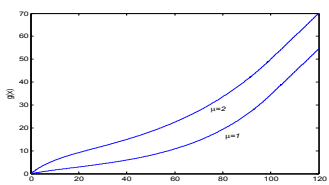}} \caption{ Value
$g(x)$ as a function of $ x$ in cases $\mu =1$ and $\mu =2 $,
respectively (parameters: $\sigma^2=50$; $\lambda=6 $; $c=0.05 $;
$b=100$)}\label{difmu}
\end{figure}
\end{example}
\begin{example} The graph \ref{psi-x} below shows
that the probability of bankruptcy $ \psi=1-\phi$ decreases with the
initial capital $x$.
\begin{figure}[H]
\resizebox{7cm}{7cm}{\includegraphics{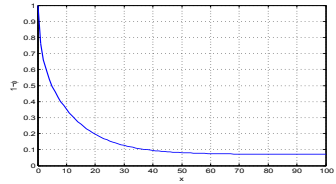}} \caption{ The
probability of bankruptcy $ \psi=1-\phi$ as a function of $x$
(parameters: $\sigma^2=50$; $\mu=2 $; $\lambda=0.4$; $c=0.05$;
$b=100$; $T=500$ )}\label{psi-x}
\end{figure}
\end{example}
\begin{example}
The graph \ref{ep-x} below shows that the risk-based capital
standard  $x$  decreases with the preferred risk level
$\varepsilon$. It states that the higher preferred risk level
$\varepsilon$ needs a lower the initial risk-based capital.
\begin{figure}[H]
\resizebox{7cm}{7cm}{\includegraphics{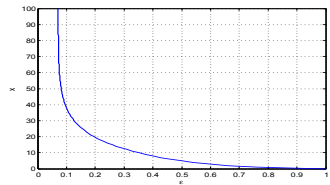}} \caption{ The
risk-based capital standard  $x$ as a function of $
\varepsilon$(parameters: $ \sigma^2=50$; $\mu=2$; $\lambda=0.4$;
$c=0.05$; $b=100$; $T=500$)}\label{ep-x}
\end{figure}
\end{example}
\begin{example}
The graphs \ref{xinep-b} and \ref{xinb-ep} below show that the
optimal dividend level $ b(\varepsilon )$ is a decreasing function
of the risk level $ \varepsilon $. Inversely, the risk level
$\varepsilon(b) $ is also
 a decreasing function of $b$.
\begin{figure}[H]
\resizebox{7cm}{7cm}{\includegraphics{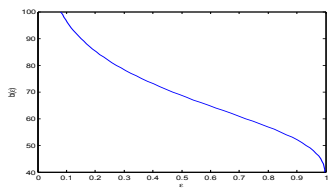}}
 \caption{ The optimal dividend level $ b(\varepsilon )$ as a function of
$\varepsilon$ (parameters: $\sigma^2=50$; $\mu=2$; $\lambda=0.4$;
$c=0.05 $; $b=100$; $T=500$ )}\label{xinep-b}
\end{figure}
\begin{figure}[H]
\resizebox{7cm}{7cm}{\includegraphics{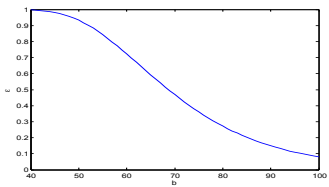}}
 \caption{ The risk level $ \varepsilon(b)$ as a function of $b$ (parameters:
$\sigma^2=50$; $\mu=2$; $\lambda=0.4$; $c=0.05$; $b=100$; $T=500$
)}\label{xinb-ep}
\end{figure}
\end{example}
\vskip 10pt \noindent
 \setcounter{equation}{0}
\section{{\small {\bf Properties on the probability of bankruptcy }} }
\vskip 10pt \noindent To give the proof of main result (Theorem
\ref{theorem33}) of this paper we  list some lemmas on properties of
the probability of bankruptcy in this section, and their  detailed
proofs will be given in section 8. \vskip 5pt \noindent
\begin{lemma}\label{lemma52} Assume that $\delta=\lambda-\mu>0$,
 $R^{\pi_{b}^\ast,b}_t$
and $\tau_{b}^{\pi_b^*}$ are the same as in Lemma \ref{lemma51}.
Then the probability of bankruptcy $\mathbf{P}[\tau_b^b\leq T]$ is
strictly decreasing on $ [m, b_K]$,where
$\tau_b^b:=\tau_{b}^{\pi_b^*}$ and $b_K:=\inf\{b:
\mathbf{P}[\tau_b^b\leq T]=0\}$.
\end{lemma}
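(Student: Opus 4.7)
The plan is to combine a strong Markov reduction with a pathwise comparison of reflected SDEs driven by the same Brownian motion. Fix $m \leq b_1 < b_2 \leq b_K$ and abbreviate $\tau^{(i)} := \tau_{b_i}^{\pi_{b_i}^*}$; the goal is to show $\mathbf{P}[\tau^{(2)} \leq T] < \mathbf{P}[\tau^{(1)} \leq T]$. First, realize both equations (\ref{414}) on a common filtered probability space with the same Brownian motion $W$. Let
\begin{equation*}
\sigma := \inf\{t \geq 0 : R_t^{\pi_{b_2}^*,b_2} = b_1\}.
\end{equation*}
Since $R_0^{\pi_{b_2}^*,b_2} = b_2 > b_1$ and paths are continuous, $\sigma > 0$ a.s. and $R_t^{\pi_{b_2}^*,b_2} \geq b_1 > 0$ on $[0,\sigma)$, so $\tau^{(2)} > \sigma$ a.s. Applying the strong Markov property at $\sigma$ yields
\begin{equation*}
\mathbf{P}[\tau^{(2)} \leq T] = \mathbf{E}\bigl[\mathbf{1}_{\{\sigma \leq T\}}\,\Psi(T-\sigma)\bigr],
\end{equation*}
where $\Psi(t)$ is the bankruptcy probability within time $t$ for a fresh process $\tilde R$ starting at $b_1$ and reflected at $b_2$ with feedback $A^*_{b_2}$.

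The second step is a pathwise comparison. Using the structure of the optimal feedback supplied by Lemma A.2 (namely $A^*_b(x)=1$ for $x\ge m$, and the restriction of $A^*_b$ to $[0,m]$ does not depend on the barrier $b \geq m$), couple $\tilde R$ and $R^{\pi_{b_1}^*,b_1}$ on the same space with the same $W$ and identical initial condition $b_1$. On the common domain $[0,b_1]$ the drift and diffusion coefficients agree pointwise, and the reflecting local time of $R^{\pi_{b_1}^*,b_1}$ pushes that process downward at $b_1$ while $\tilde R$ is free to continue upward into $(b_1,b_2]$. A Skorokhod first-crossing-time argument (if $\tilde R_t < R_t^{\pi_{b_1}^*,b_1}$ first occurred at some $\tau_0$, both processes would be at the same point in $[0,b_1]$ and evolve by the same SDE past $\tau_0$, contradicting pathwise uniqueness in the spirit of Corollary A.1) gives $\tilde R_t \geq R_t^{\pi_{b_1}^*,b_1}$ for all $t \geq 0$. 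In particular $\inf\{t:\tilde R_t = 0\} \geq \tau^{(1)}$, so $\Psi(t) \leq \mathbf{P}[\tau^{(1)} \leq t]$ for every $t \geq 0$, and chaining with the previous display,
\begin{equation*}
\mathbf{P}[\tau^{(2)} \leq T] \leq \mathbf{E}\bigl[\mathbf{1}_{\{\sigma \leq T\}}\,\mathbf{P}[\tau^{(1)} \leq T-\sigma]\bigr] \leq \mathbf{P}[\tau^{(1)} \leq T].
\end{equation*}

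Finally, to upgrade the last two inequalities to a strict one, note that $b_1 < b_K$ implies $\mathbf{P}[\tau^{(1)} \leq T] > 0$ by the very definition of $b_K$. Since $\sigma > 0$ a.s., either $\mathbf{P}[\sigma > T] > 0$ (giving strict inequality in the inner estimate immediately), or $\sigma \in (0,T]$ a.s., in which case the map $s \mapsto \mathbf{P}[\tau^{(1)} \leq s]$ is strictly increasing on $(0,T]$ thanks to the non-degeneracy of the diffusion coefficient $\sigma A^*_{b_1}(\cdot)$ (Corollary A.1), and hence $\mathbf{P}[\tau^{(1)} \leq T-\sigma] < \mathbf{P}[\tau^{(1)} \leq T]$ on a set of positive probability. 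The main obstacle is the pathwise comparison $\tilde R \geq R^{\pi_{b_1}^*,b_1}$ at the dividend barrier $b_1$: one must carefully argue that in the coupled Skorokhod representation the two processes share identical drift and diffusion on $[0,b_1]$ (which is where Lemma A.2 is essential), so that they differ only through their reflection terms, and that the lower barrier $b_1$ produces a larger local-time increment than the higher barrier $b_2$. Once this is in place, the remaining steps are standard.
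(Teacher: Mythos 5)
Your approach follows the same skeleton as the paper's: apply the strong Markov property at the first hitting time $\sigma$ of $b_1$ by the process started at $b_2$, and compare the two barrier processes started from $b_1$. The chain
$\mathbf{P}[\tau^{(2)}\leq T]=\mathbf{E}\bigl[\mathbf{1}_{\{\sigma\leq T\}}\Psi(T-\sigma)\bigr]\leq \mathbf{E}\bigl[\mathbf{1}_{\{\sigma\leq T\}}\mathbf{P}[\tau^{(1)}\leq T-\sigma]\bigr]\leq\mathbf{P}[\tau^{(1)}\leq T]$
is a mild refinement of the paper's version (the paper coarsens $T-\sigma$ to $T$ in the window, so that the conditional probability is exactly $\mathbf{P}[\tau_{b_1}^{\pi^*_{b_2}}\leq T]$, and then factors out $\mathbf{P}(A^c)$). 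Up to this point the two proofs are essentially the same.

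The genuine gap is in the strict-inequality step. The paper goes on to \emph{prove} that $\mathbf{P}[\sigma>T]>0$: it compares $R^{[2]}$ from above with a reflected process $R^{[3]}$ started at $(b_1+b_2)/2$, replaces it by the un-reflected process $R^{[4]}$, and then, using $A^*_{b_2}\equiv 1$ above $m$, reduces to an explicit positive double-barrier probability for Brownian motion with drift on the strip $(b_1,b_2)$. Your proposal does not carry out this computation; instead it splits into the dichotomy ``either $\mathbf{P}[\sigma>T]>0$ or $\sigma\leq T$ a.s.'' and handles the second branch by asserting strict monotonicity of $s\mapsto\mathbf{P}[\tau^{(1)}\leq s]$ on $(0,T]$ ``thanks to the non-degeneracy of the diffusion coefficient (Corollary A.1).'' Corollary A.1 only gives the uniform ellipticity bound $A^*\geq d>0$; it does not by itself deliver strict monotonicity of the first-passage distribution, and establishing that monotonicity (e.g., by a strong Markov argument plus Theorem \ref{theorem41}) is roughly as much work as proving $\mathbf{P}[\sigma>T]>0$ directly. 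You also need $\Psi(T)>0$, respectively $\mathbf{P}[\tau^{(1)}\leq T]>0$, for Case 1 to give strict inequality — you cite the definition of $b_K$ for the latter, which is fine, but the paper's way (Theorem \ref{theorem41}) is what actually gets used for the former. So the idea is right, but the concluding strictness step is not closed as written; filling it in essentially recreates the paper's explicit $\mathbf{P}(A^c)>0$ argument.
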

\vskip 5pt \noindent
\begin{Prop}\label{proposition51}
 Assume that $\delta=\lambda-\mu>0$
 and  for any $b\geq b_0 $ we define $R^{\pi_{b}^\ast,y}_t$ by
 (\ref{41}) with initial value $y$.
 Let $\phi^{b}(t,y)\in C^1(0,\infty)\cap C^2(0,b)$ and satisfy the
following partial differential equation with  boundary conditions,
\begin{eqnarray}\label{422}
\left\{
\begin{array}{l l l}
\phi_{t}^{b}(t,y)=\frac{1}{2}[A_b^{*}(y)]^2\sigma^2\phi_{yy}^{b}(
t,y)+(\lambda A_b^{*}(y)-\delta)\phi_{y}^{b}(t,y),\\
\phi^{b}(0,y)=1,\  \mbox{for}\ \  0<y\leq b, \\
\phi^{b}(t,0)=0,\phi_{y}^{b}(t,b)=0,\ \mbox{for} \ t>0.
\end{array}\right.
\end{eqnarray}
Then $\phi^{b}(T,y)=1-\psi^{b}(T,y)$, i.e., $\phi^{b}(T,y)$ is
probability that the company will survive on $[0, T]$, where
$\psi^{b}(T,y)= \mathbf{P}\{\tau^{\pi_{b}^\ast}_y\leq T\}.$
\end{Prop}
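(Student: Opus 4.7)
The plan is a standard verification-type argument using It\^o's formula and the Feynman--Kac representation for the reflected--absorbed diffusion $R_t^{\pi_b^\ast,y}$. Note first that the generator of the underlying (unreflected) diffusion on $(0,b)$ is
\begin{equation*}
\mathcal{L}u(y)=\tfrac{1}{2}[A_b^\ast(y)]^2\sigma^2 u_{yy}(y)+(\lambda A_b^\ast(y)-\delta)u_y(y),
\end{equation*}
so the PDE in (\ref{422}) is exactly the backward Kolmogorov/heat equation $\phi_t^b=\mathcal{L}\phi^b$ with initial datum $1$, Dirichlet datum $0$ at the absorbing boundary $y=0$, and Neumann datum $0$ at the reflecting boundary $y=b$. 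This is precisely the PDE one expects for $y\mapsto\mathbf{P}\{\tau_y^{\pi_b^\ast}>T\}$.

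First I would set $u(t,y):=\phi^b(T-t,y)$ on $[0,T]\times[0,b]$, so that $u_t+\mathcal{L}u=0$, $u(T,y)=1$, $u(t,0)=0$ and $u_y(t,b)=0$. Then I would apply the It\^o--Tanaka formula to the process $s\mapsto u(s,R_s^{\pi_b^\ast,y})$ up to time $T\wedge\tau_y^{\pi_b^\ast}$, using the dynamics in (\ref{41}):
\begin{equation*}
\begin{aligned}
u(T\wedge\tau,R_{T\wedge\tau})=u(0,y)&+\int_0^{T\wedge\tau}(u_t+\mathcal{L}u)(s,R_s)\,ds\\
&+\int_0^{T\wedge\tau}\sigma A_b^\ast(R_s)u_y(s,R_s)\,dW_s-\int_0^{T\wedge\tau}u_y(s,R_s)\,dL_s^{\pi_b^\ast}.
\end{aligned}
\end{equation*}
The Lebesgue integral vanishes by the PDE. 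The $dL^{\pi_b^\ast}$ integral vanishes because $L^{\pi_b^\ast}$ only increases on $\{s:R_s=b\}$, on which $u_y(s,b)=\phi_y^b(T-s,b)=0$ by the Neumann condition. Since $A_b^\ast$ is bounded (by part~(ii) of Corollary~A.1 in the appendix) and $u_y$ is bounded on $[0,T]\times[0,b]$, the stochastic integral is a true martingale, so taking expectations gives $u(0,y)=\mathbf{E}[u(T\wedge\tau,R_{T\wedge\tau})]$.

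Finally I would split this expectation: on $\{\tau\le T\}$ one has $R_{T\wedge\tau}=R_\tau=0$ so $u(\tau,0)=\phi^b(T-\tau,0)=0$; on $\{\tau>T\}$ one has $u(T,R_T)=\phi^b(0,R_T)=1$. Hence
\begin{equation*}
\phi^b(T,y)=u(0,y)=\mathbf{P}\{\tau_y^{\pi_b^\ast}>T\}=1-\psi^b(T,y).
\end{equation*}
The main obstacle, and the only non-routine point, is technical: one has to justify the It\^o--Tanaka expansion at the two boundaries of $[0,b]$ (in particular that the Neumann condition is what is needed to kill the local-time contribution at $b$, and that absorption at $0$ stops the process before any issue of non-smoothness of $\phi^b$ at the corner $(t,0)$ arises). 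The regularity $\phi^b\in C^1((0,\infty))\cap C^2((0,b))$ that is assumed in the statement, together with the boundedness of $A_b^\ast$ away from $0$ on $(0,m]$ and equal to $1$ on $[m,b]$ from Corollary~A.1, is enough to carry this out rigorously by a standard approximation (stopping just before $0$ and just inside $b$, then passing to the limit via dominated convergence using $0\le u\le 1$).
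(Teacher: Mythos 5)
Your verification argument is exactly the one the paper uses: apply the generalized It\^o formula to $\phi^b(T-s,R^{\pi_b^\ast,y}_s)$ on $[0,T\wedge\tau_y^b]$, kill the drift term by the PDE, kill the $dL^{\pi_b^\ast}$ term by the Neumann condition at $b$, take expectations (the stochastic integral has zero mean), and split the terminal expectation over $\{\tau_y^b\le T\}$ and $\{\tau_y^b>T\}$ using the Dirichlet and initial conditions. Your time-reversed notation $u(t,y)=\phi^b(T-t,y)$ and the extra remarks on boundary regularity are cosmetic additions; the substance coincides with the paper's proof.
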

\begin{Remark}\label{R41} If we define $a(y):=\frac{1}{2}[A_b^{*}(y)]^2\sigma^2$,
$\mu(y):=\lambda A_b^{*}(y)-\delta$, then the equation (\ref{422})
becomes
\begin{eqnarray*}
\phi_{t}^{b}(t,y)=a(y)\phi_{yy}^{b}( t,y)+\mu(y)\phi_{y}^{b}(t,y).
\end{eqnarray*}
By the properties of $A_b^{*}(y)$, we can easily show that $a(y)$
and $\mu(y)$ are continuous in $[0,b]$. So there exists a unique
solution in $C^1(0,\infty)\cap C^2(0,b)$ for (\ref{422}). Moreover,
by Corollary A.\ref{corollary31} in appendix, $a^{'}(y)$,
$\mu^{'}(y)$ and $a^{''}(y)$ are bounded in $(0,m)$ and $(m,b)$.
\end{Remark}
\vskip 10pt \noindent The following,  together with Proposition
\ref{proposition51}, states that
 the  probability of bankruptcy
$\psi^{b}(T, b):={\bf P}\big \{ \tau_{b}^{\pi^*_b}< T \big\} $ is
continuous  with respect to $b(b\geq b_0)$.
 \vskip 10pt \noindent
\begin{lemma}\label{lemma53}
Assume that the same conditions as in Lemma \ref{lemma52}. Let the
function $\phi^{b}(t,x)$ solve the equation(\ref{422}) and
$u(b):=\phi^{b}(T,b)$. Then $u(b)$ is continuous with respect to
$b(b\geq b_0 )$.
\end{lemma}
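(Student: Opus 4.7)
The plan is to reduce the continuity of $u(b)$ to continuity of the bankruptcy probability through Proposition~\ref{proposition51}, and then verify this probabilistic continuity via a coupling of the reflected SDEs. Concretely, $u(b)=\phi^{b}(T,b)=1-\mathbf{P}\{\tau_{b}^{\pi_{b}^{\ast}}\le T\}$, so it is enough to show that $b\mapsto\mathbf{P}\{\tau_{b}^{\pi_{b}^{\ast}}\le T\}$ is continuous on $[b_{0},\infty)$.

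Fix $b\ge b_{0}$ and a sequence $b_{n}\to b$. On a single filtered probability space carrying the Brownian motion $W$, I would construct the solutions $(R_{t}^{\pi^{\ast}_{b_n},b_n},L_{t}^{\pi^{\ast}_{b_n}})$ and $(R_{t}^{\pi^{\ast}_{b},b},L_{t}^{\pi^{\ast}_{b}})$ of the reflected SDE \eqref{414} driven by the same $W$, with reflection barriers $b_{n}$ and $b$ and initial values $b_{n}$ and $b$. By the structure of the feedback control recalled in Remark~\ref{R41}, namely that $A^{\ast}_{b}(y)$ is Lipschitz and coincides with the same $b$-independent function on $\{y\le m\}$ and on $\{y\ge m\}$ for all $b\ge b_{0}>m$, the coefficients in \eqref{414} satisfy global Lipschitz bounds uniform in $b$. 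Standard Skorokhod-map stability, combined with a Gronwall argument, then yields an estimate of the form
\[
\mathbf{E}\Bigl[\sup_{0\le t\le T}\bigl|R_{t}^{\pi^{\ast}_{b_n},b_n}-R_{t}^{\pi^{\ast}_{b},b}\bigr|^{2}\Bigr]\le C(T)\bigl(|b_{n}-b|^{2}+|b_{n}-b|\,\mathbf{E}[L_{T}^{\pi^{\ast}_{b}}+L_{T}^{\pi^{\ast}_{b_n}}]\bigr),
\]
which tends to $0$ as $n\to\infty$; hence the reflected processes converge uniformly in probability on $[0,T]$.

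From this pathwise convergence I would upgrade to convergence of the bankruptcy times: because $A^{\ast}_{b}(y)=y/m$ for $y\le m$ the diffusion coefficient is non-degenerate away from $0$ and vanishes only linearly at $0$, so $R^{\pi^{\ast}_{b},b}$ crosses the level $0$ transversally almost surely and the law of $\tau_{b}^{\pi^{\ast}_{b}}$ has no atom, in particular $\mathbf{P}\{\tau_{b}^{\pi^{\ast}_{b}}=T\}=0$. The standard almost-sure continuity of first-passage times at such non-degenerate hitting times then gives $\tau_{b_{n}}^{\pi^{\ast}_{b_n}}\to\tau_{b}^{\pi^{\ast}_{b}}$ in probability, and dominated convergence delivers $\mathbf{P}\{\tau_{b_{n}}^{\pi^{\ast}_{b_n}}\le T\}\to\mathbf{P}\{\tau_{b}^{\pi^{\ast}_{b}}\le T\}$, proving $u(b_{n})\to u(b)$.

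The main obstacle is handling the simultaneous variation of the reflection barrier and the initial condition in \eqref{414}: the standard contraction estimates for unreflected SDEs have to be supplemented by stability of the Skorokhod reflection map in both the barrier and the starting point, and the error picks up a contribution from the local time $L^{\pi^{\ast}_{b}}$ whose expectation must be controlled uniformly in $b$ on bounded intervals. A secondary nuisance is the non-smoothness of $A^{\ast}_{b}$ at $y=m$; however, thanks to Remark~\ref{R41} the coefficients $a(y)$ and $\mu(y)$ are still Lipschitz globally on $[0,b]$, so the pathwise estimates go through without modification. A purely PDE alternative via the change of variable $z=y/b$ and continuous dependence of uniformly parabolic equations on their coefficients is also available as a backup, but the probabilistic route above is cleaner because it directly exploits the representation already established in Proposition~\ref{proposition51}.
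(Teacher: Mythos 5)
Your route is genuinely different from the paper's. The paper works entirely at the PDE level: after rescaling $z=y/b$ it forms $w=\theta^{b_2}-\theta^{b_1}$, tests the parabolic equation against $w$, and runs an energy/Gronwall estimate using the Lipschitz dependence of $a(bz)/b^2$ and $\mu(bz)/b$ on $b$ from Corollary~A.\ref{corollary31}; what it extracts is $\lim_{b_2\to b_1}\int_0^t\int_0^1 w^2\,dz\,ds=0$, i.e.\ $L^2$ convergence in $(s,z)$. Your proposal instead couples the reflected SDEs \eqref{414} for barriers $b_n$ and $b$ on a common Brownian motion and then passes through Proposition~\ref{proposition51}. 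If carried out, this would prove the pointwise statement $u(b_n)\to u(b)$ directly, whereas the paper's energy estimate still leaves one to extract the value at the corner $(T,1)$ from an $L^2$ bound. So the probabilistic route is, in principle, cleaner for the stated conclusion.

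Two issues in the sketch, however, need attention. First, a factual slip: you assert $A^*_b(y)=y/m$ for $y\le m$, concluding the diffusion coefficient ``vanishes only linearly at $0$.'' That is not the formula (\ref{35}), and Corollary~A.\ref{corollary31}(ii) explicitly gives $A^*_b(y)\ge d>0$ uniformly in $y\ge0$, so $\sigma A^*_b$ is bounded \emph{away} from zero on all of $[0,b]$. Your conclusion (transversal crossing of level $0$, no atom in the law of $\tau_b^{\pi^*_b}$) is correct, but for the opposite reason to the one you give; indeed, were the coefficient truly vanishing linearly at $0$ with negative drift, the approach to $0$ would be essentially tangential, and ``transversal crossing'' would not follow from your own premise. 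Second, the two steps labelled ``standard''---stability of the Skorokhod reflection map under simultaneous perturbation of the barrier $b$ and the starting point, and the upgrade from uniform path convergence to convergence of the absorption times $\tau_{b_n}^{\pi^*_{b_n}}\to\tau_b^{\pi^*_b}$---are precisely where the content of the lemma lies and neither is immediate. The first needs a Lipschitz estimate for the one-sided reflection map on a moving interval $[0,b]$ together with a uniform bound on $\mathbf{E}L_T^{\pi^*_b}$; the second needs a strong-Markov plus non-degeneracy-near-$0$ argument, since uniform path closeness alone does not control hitting times. As written, the proposal is a plausible alternative outline rather than a proof; these two steps would have to be supplied.
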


\vskip 10pt \noindent
 As a direct consequence of Theorem
\ref{lemma51}, Lemmas \ref{lemma52} and \ref{lemma53} we have the
following. \vskip 5pt \noindent
\begin{Them}\label{theorem51} For any fixed $\varepsilon >0$,
there exists a unique $ b^*$ satisfying ${\bf P}\big \{
\tau_{b^*}^{\pi^*_{b^*}}< T \big\}=\varepsilon$.
\end{Them}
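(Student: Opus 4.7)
The plan is to assemble the three preceding results into a continuity/monotonicity argument and apply the intermediate value theorem.

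First, I would introduce the function $u(b):=\mathbf{P}\{\tau_{b}^{\pi^*_{b}}\leq T\}$ on $[b_0,\infty)$ and record its three relevant properties. By Proposition~\ref{proposition51}, $u(b)=1-\phi^{b}(T,b)$, where $\phi^b$ solves the PDE (\ref{422}); by Lemma~\ref{lemma53} this identification makes $u$ continuous in $b$. By Lemma~\ref{lemma52}, $u$ is strictly decreasing on the interval $[m,b_K]$, and $u(b)\equiv 0$ on $[b_K,\infty)$. Finally, Theorem~\ref{lemma51} gives the asymptotic $\lim_{b\to\infty}u(b)=0$, which is consistent with (and implied by) the definition of $b_K$.

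Next, I would pin down the behavior of $u$ at the left endpoint. We are in the regime where the traditional optimal policy fails the solvency constraint, i.e.\ case (ii) of Theorem~\ref{theorem33}, so by hypothesis $u(b_0)=\mathbf{P}\{\tau_{b_0}^{\pi^*_{b_0}}\leq T\}>\varepsilon$. (Independently, Theorem~\ref{theorem41} supplies the quantitative lower bound $u(b_0)\geq \varepsilon_0>0$, so the problem is nontrivial.) Combined with $\lim_{b\to\infty}u(b)=0<\varepsilon$ and the continuity of $u$, the intermediate value theorem delivers at least one $b^*\in(b_0,b_K)$ with $u(b^*)=\varepsilon$. Strict monotonicity of $u$ on $[m,b_K]\supset[b_0,b_K]$ then rules out any second crossing: if $b_1<b_2$ both satisfied $u(b_i)=\varepsilon$, Lemma~\ref{lemma52} would force $u(b_1)>u(b_2)$, a contradiction. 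For $b\geq b_K$, $u(b)=0\neq\varepsilon$, so no root is lost there either.

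The argument is essentially a bookkeeping exercise once the three preparatory results are in hand; the substantive work has already been done in Theorem~\ref{lemma51}, Lemma~\ref{lemma52}, and Lemma~\ref{lemma53}. The only subtlety I anticipate is making precise the interval on which we apply the IVT — one must check that $b_0\geq m$ (so that strict monotonicity from Lemma~\ref{lemma52} covers the starting point) and that the case $\varepsilon\geq u(b_0)$ does not occur under the standing assumption of case (ii) of Theorem~\ref{theorem33}. Both are immediate from the setup, so no new estimate is required and the proof reduces to citing Proposition~\ref{proposition51}, Lemmas~\ref{lemma52}--\ref{lemma53}, and Theorem~\ref{lemma51} in sequence.
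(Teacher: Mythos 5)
Your proof matches the paper's intended argument exactly: the paper states Theorem~\ref{theorem51} as a ``direct consequence'' of Theorem~\ref{lemma51}, Lemma~\ref{lemma52}, and Lemma~\ref{lemma53} without spelling out the IVT-plus-monotonicity bookkeeping, and your writeup supplies precisely that missing bookkeeping. Your observation that existence requires the standing hypothesis $\varepsilon<\mathbf{P}\{\tau_{b_0}^{\pi^*_{b_0}}\leq T\}$ (i.e.\ case (ii) of Theorem~\ref{theorem33}) is correct and in fact tightens the paper's slightly loose ``for any fixed $\varepsilon>0$'' phrasing.
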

\setcounter{equation}{0}
\section{ {\bf Proof of Main Results}}
 \vskip 15pt\noindent
In this section we give the  proof of the main results of this
paper. we  first need the following.
\begin{Them}\label{theorem61}
Let $V(x,b)$ and $V(x)$ be defined by (\ref{f3}) and (\ref{f4}),
$f(x)$, $g(x)$ and $A^*_b(x)$ be the same as in  Lemma A.
\ref{theorem31} and Lemma A.\ref{theorem32} in appendix,
respectively. We have the following. \vskip 10pt \noindent
 (i) If $b\leq b_0$, then
$V(x,b)=V(x,b_0)=J(x,\pi_{b_o}^\ast)=V(x,0)=f(x)$ and the optimal
policy $\pi_{b_o}^\ast=\{
 A^*_{b_0}(R^{\pi_{b_o}^\ast}_\cdot),L^{\pi_{b_o}^\ast}_\cdot\}$ is uniquely
determined by the following SDE:
\begin{eqnarray}\label{342}
\left\{
\begin{array}{l l l}
dR_t^{\pi_{b_o}^\ast}=(\mu-(1-A^*_{b_0}(R^{\pi_{b_o}^\ast}_t)
)\lambda)dt+\sigma A^*_{b_0}(R^{\pi_{b_o}^\ast}_t)d
{W}_{t}-dL_t^{\pi_{b_o}^\ast},\\
R_0^{\pi_{b_o}^\ast}=x,\\
0\leq R^{\pi_{b_o}^\ast}_t\leq  b_0,\\
\int^{\infty}_0 I_{\{t: R^{\pi_{b_o}^\ast}_t
<b_0\}}(t)dL_t^{\pi_{b_o}^\ast}=0.
\end{array}\right.
\end{eqnarray}
(ii) If $b\geq b_0$, then $V(x,b)=J(x,\pi^*_{b})=  g(x)$ and the
optimal policy $\pi_{b}^\ast=\{
 A^*_b(R^{\pi_{b}^\ast}_\cdot),L^{\pi_{b}^\ast}_\cdot\}$
is uniquely determined by the following SDE:
\begin{eqnarray}\label{343}
\left\{
\begin{array}{l l l}
dR_t^{\pi_{b}^\ast}=(\mu-(1-A^*_b(R^{\pi_{b}^\ast}_t)
)\lambda)dt+\sigma A^*_b(R^{\pi_{b}^\ast}_t)d
{W}_{t}-dL_t^{\pi_{b}^\ast},\\
R_0^{\pi_{b}^\ast}=x,\\
0\leq R^{\pi_{b}^\ast}_t\leq  b,\\
\int^{\infty}_0 I_{\{t: R^{\pi_{b}^\ast}_t
<b\}}(t)dL_t^{\pi_{b}^\ast}=0.
\end{array}\right.
\end{eqnarray}
\end{Them}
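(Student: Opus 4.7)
The plan is to use a standard verification-and-attainment argument based on the candidate value functions $f$ and $g$ constructed in the appendix (Lemmas A.1 and A.2) as classical solutions to the associated HJB variational inequality. For part (i), I would first establish the upper bound $V(x,b)\le f(x)$ for any $b\le b_0$ by applying It\^o's formula to $e^{-ct}f(R^\pi_t)$ for an arbitrary $\pi\in\Pi_b$. Since $f\in C^1$ with second derivative bounded on $(0,m)\cup(m,b_0)$ (see Remark \ref{R41}), the It\^o--Meyer formula applies across the kink at $x=m$. Combining the HJB inequality $\tfrac12\sigma^2a^2 f''(x)+(\mu-(1-a)\lambda)f'(x)-cf(x)\le 0$, valid for every $0\le a\le 1$, with the envelope $f'(x)\ge 1$, and decomposing the singular control as $L^\pi=L^{\pi,c}+L^{\pi,d}$, I obtain that $e^{-ct}f(R^\pi_t)+\int_0^t e^{-cs}\,dL^\pi_s$ is a local supermartingale issuing from $f(x)$. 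A localization along $\tau_n=\tau^\pi_x\wedge n\wedge\inf\{t:R^\pi_t\ge n\}$, together with $f(0)=0$, the linear growth of $f$, and dominated/monotone convergence as $n\to\infty$, then yields $J(x,\pi)\le f(x)$.

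For the attainment, the reflected SDE \eqref{342} admits a unique strong solution $(R^{\pi^*_{b_0}}_\cdot,L^{\pi^*_{b_0}}_\cdot)$ by the Skorokhod / Lions--Sznitman theory applied to the continuous coefficients produced by $A^*_{b_0}$. Applying It\^o's formula to this solution turns the HJB inequality into an equality along the trajectory. Because reflection occurs only at $b_0$, where $f'(b_0)=1$, one has $f'(R^{\pi^*_{b_0}}_t)\,dL^{\pi^*_{b_0}}_t=dL^{\pi^*_{b_0}}_t$; taking expectations gives $J(x,\pi^*_{b_0})=f(x)$. Since this policy pays dividends only when $R=b_0\ge b$, we have $\pi^*_{b_0}\in\Pi_b$ for every $b\le b_0$, so the upper bound is attained and $V(x,b)=V(x,b_0)=J(x,\pi^*_{b_0})=V(x,0)=f(x)$.

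Part (ii) is strictly parallel, with $g$, the barrier $b$, and the reflected process of \eqref{343} taking the roles of $f$, $b_0$ and the reflected process of \eqref{342}. Lemma A.2 supplies the same regularity (and the same HJB inequality away from the kink) that made the It\^o argument work, and the Neumann-type condition $g'(b)=1$ at the reflecting barrier $b$ enforces the equality $g'(R^{\pi^*_{b}}_t)\,dL^{\pi^*_{b}}_t=dL^{\pi^*_{b}}_t$. Since $\pi^*_b\in\Pi_b$ by construction, the supremum is attained.

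The main technical hurdle will be justifying the It\^o--Meyer formula across the interior kink $x=m$ where $A^*_b$ switches between its interior value $<1$ and the boundary value $1$; this is handled by the global $C^1$ regularity combined with the piecewise boundedness of $a''$ and $\mu'$ recorded in Remark \ref{R41}. A secondary point is converting the local-supermartingale inequality into an expectation inequality up to the possibly infinite stopping time $\tau^\pi_x$; the discount factor $e^{-cs}$, the linear growth of $f$ and $g$, and a B--D--G estimate on $\int_0^\cdot\sigma A^*_b(R^\pi_s)\,dW_s$ give the required uniform integrability. Uniqueness of the optimal policy then follows from uniqueness of the Skorokhod problem for the reflected SDE.
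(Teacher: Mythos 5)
Your verification-and-attainment scheme is essentially the same as the paper's: both apply a generalized It\^o (It\^o--Meyer) formula to the candidate value function, split $L^\pi$ into its continuous and jump parts, invoke $\max_a\mathcal L f\le 0$, the derivative bound $f'\ge 1$, and $f(0)=0$ to get $J(x,\pi)\le f(x)$, and then show that the reflected SDE policy turns every inequality into an equality (using Lions--Sznitman for existence/uniqueness). Your localization differs cosmetically from the paper's --- you stop at $\tau_n = \tau^\pi_x\wedge n\wedge\inf\{t:R^\pi_t\ge n\}$ with a B--D--G/uniform-integrability argument, while the paper stops at $\tau^\varepsilon=\inf\{t:R^\pi_t\le\varepsilon\}$ and lets $\varepsilon\to 0$ --- but these are equivalent devices, and the monotonicity observation $\pi^*_{b_0}\in\Pi_{b_0}\subset\Pi_b$ for $b\le b_0$ is the same.

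However, there is a genuine gap in your treatment of part (ii), hidden by the phrase ``strictly parallel.'' The upper bound in part (i) relies on the global envelope $f'(x)\ge 1$ for all $x\ge 0$, which delivers the jump estimate $f(R^\pi_s)-f(R^\pi_{s-})\le -(L^\pi_s-L^\pi_{s-})$. But Lemma~A.2(ii) asserts $g'(x)\ge 1$ \emph{only for $x\ge b$}, not on all of $[0,\infty)$; for a suboptimal barrier $b>b_0$ one may well have $g'<1$ somewhere below $b$. So the jump estimate for $g$ cannot be obtained verbatim; you must first use the constraint $\pi\in\Pi_b$, namely $\int_0^\infty I_{\{R^\pi_s<b\}}\,dL^\pi_s=0$, to conclude that any jump of $L^\pi$ has $R^\pi_s\ge b$ and hence $R^\pi_{s-}\ge R^\pi_s\ge b$, so the jump is traversed entirely in the region where $g'\equiv 1$. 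This is precisely what the paper records in its display~(\ref{flemma3}) before re-running the It\^o argument with $g$ in place of $f$. Without that observation, your ``same envelope'' claim is unjustified for part~(ii). Once the restriction of jumps to $\{R\ge b\}$ is inserted, the rest of your argument goes through as you describe.
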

\begin{proof}
(i) Assume that $b\leq b_0$. By using the fourth equality in
(\ref{342}), it follows that $\pi^*_{b_0}\in \Pi_{b_0}\subset\Pi_b
$, so $V(x, b_0)\leq V(x,b)\leq V(x,0)$. Therefore it suffices to
prove that $V(x,0)\leq f(x)=V(x, b_0)=J(x,\pi_{b_o}^\ast)   $. For
any admissible policy $\pi=\{a_\pi, L^\pi\}$, we assume that the
process $( R^\pi_t, L^\pi_t)$ satisfies (\ref{21}). Let
$\Lambda=\{s:L_{s-}^{\pi}\neq L_{s}^{\pi}\}$ and
$\hat{L}=\sum_{s\in\Lambda, s\leq t}(L_{s}^{\pi}-L_{s-}^{\pi})$ be
the discontinuous part of $L_{s}^{\pi}  $ and
$\tilde{L}_{t}^{\pi}=L_{t}^{\pi}-\hat{L}_{t}^{\pi}$ be the
continuous part of $L_{s}^{\pi}$, respectively. Define
$\tau^\varepsilon =\inf\{t\geq 0: R^{\pi}_{t}\leq \varepsilon\}$. By
applying generalized It\^{o} formula to stochastic process $ R^\pi_t
$ and the function $f(x)$, we get
\begin{eqnarray}\label{344}
e^{-c(t\wedge \tau^{\varepsilon})}f(R_{t\wedge
\tau^{\varepsilon}}^{\pi})&=
&f(x)+\int_{0}^{t\wedge\tau^{\varepsilon}}e^{-cs}\mathcal
{L} f(R_{s}^{\pi})ds\nonumber\\
&+&\int_{0}^{t\wedge\tau^{\varepsilon}}a_\pi\sigma
e^{-cs}f^{'}(R_{s}^{\pi})d{W}_{s}\mathcal
-\int_{0}^{t\wedge\tau^{\varepsilon}}e^{-cs}f^{'}(R_{s}^{\pi})dL_{s}^{\pi}\nonumber\\
&+&\sum\limits_{s\in\Lambda ,s\leq t\wedge
\tau^{\varepsilon}}e^{-cs}[f(R_{s}^{\pi})-f(R_{s-}^{\pi})\nonumber\\
&-&f^{'}(R_{s-}^{\pi})(R_{s}^{\pi}-R_{s-}^{\pi})]\nonumber \\
&= &f(x)+\int_{0}^{t\wedge\tau^{\varepsilon}}e^{-cs}\mathcal
{L}  f(R_{s}^{\pi})ds\nonumber\\
&+&\int_{0}^{t\wedge\tau^{\varepsilon}}a_\pi \sigma
e^{-cs}f^{'}(R_{s}^{\pi})d {W}_{s}\mathcal
-\int_{0}^{t\wedge\tau^{\varepsilon}}e^{-cs}f^{'}(R_{s}^{\pi})d\tilde{L}_{s}^{\pi}\nonumber\\
&+&\sum\limits_{s\in\Lambda ,s\leq t\wedge
\tau^{\varepsilon}}e^{-cs}[f(R_{s}^{\pi})-f(R_{s-}^{\pi}))],
\end{eqnarray}
where
\begin{eqnarray*}
\mathcal
{L}=\frac{1}{2}a^{2}\sigma^{2}\frac{d^{2}}{dx^{2}}+(\mu-(1-a)\lambda)\frac{d}{dx}-c.
\end{eqnarray*}
By  the (\ref{34}) and $f^{'}(R_{s\wedge
\tau^\varepsilon}^{\pi})\leq f^{'}(\varepsilon)$, the second term
and third term on the right hand side of (\ref{344}) is non-positive
and  a square integrable martingale, respectively, therefore, by
taking mathematical expectations at both sides of (\ref{344}) and
letting $\varepsilon\rightarrow 0$, we have
\begin{eqnarray}\label{345}
{\bf E}\big \{e^{-c(t\wedge \tau ^\pi _x)}f(R_{t\wedge \tau ^\pi
_x}^{\pi})\big \}&\leq& f(x)-{\bf E}\big \{\int_{0}^{t\wedge \tau
^\pi _x}e^{-cs}
f^{'}(R_{s}^{\pi})d\tilde{L}_{s}^{\pi}\big \}\nonumber\\
&+&{\bf E}\big\{\sum\limits_{s\in\Lambda ,s\leq t\wedge
\tau ^\pi _x}e^{-cs}[f(R_{s}^{\pi})-f(R_{s-}^{\pi})]\big \}.\nonumber\\
\end{eqnarray}
Since $f^{'}(R_{s}^{\pi})\geq 1$ for $x\geq 0$,
\begin{eqnarray}\label{346}
f(R_{s}^{\pi})-f(R_{s-}^{\pi})\leq-(L_{s}^{\pi}-L_{s-}^{\pi}),
\end{eqnarray}
 which, together with (\ref{345}), implies
that
\begin{eqnarray}\label{347}
{\bf E}\big\{e^{-c(t\wedge \tau ^\pi _x)}f(R_{t\wedge \tau ^\pi
_x}^{\pi})\big\}&+&{\bf E}\big\{\int_{0}^{ t\wedge \tau ^\pi
_x}e^{-cs}dL_{s}^{\pi}\big \}\leq f(x).
\end{eqnarray}
By definition of $\tau ^\pi _x $ and $f(0)=0$, it is easy to prove
that
\begin{eqnarray}\label{348}
\liminf\limits_{t\rightarrow\infty}e^{-c(t\wedge \tau ^\pi
_x)}f(R_{t\wedge
\tau ^\pi _x}^{\pi})&=&e^{-c\tau}f(0)I_{\{\tau ^\pi _x<\infty\}}\nonumber\\
&+& \liminf\limits_{t\rightarrow\infty}e^{-ct}f(R_{t})I_{\{\tau ^\pi
_x =\infty\}}\geq 0.
\end{eqnarray}
So we see from (\ref{348}) and (\ref{347}) that
\begin{eqnarray*}
J(x,\pi)={\bf E}[\big\{\int_{0}^{\tau ^\pi _x} e^{-cs}dL_{s}^{\pi}
\}]\leq f(x).
\end{eqnarray*}
Thus
\begin{eqnarray}
V(x,0)\leq f(x).
\end{eqnarray}
If we let policy $\pi_{b_o}^\ast=\{
 A^*_{b_0}(R^{\pi_{b_o}^\ast}_\cdot),L^{\pi_{b_o}^\ast}_\cdot\}$ ,
  which is uniquely determined by SDE(\ref{342}),
see Lions and Sznitman\cite{s10100}, then $R^{\pi_{b_o}^\ast}_t$ and
$ L^{\pi_{b_o}^\ast}_t$ are continuous  stochastic processes. So all
the inequalities above become equalities and
\begin{eqnarray*}
V(x,0)\leq f(x)=V(x,b_0)= J(x,\pi_{b_o}^\ast)    .
\end{eqnarray*}
The proof of the part (i) follows.

(b) We assume that $b\geq b_0$. For any $\pi\in \Pi_b $, let  $(
R^\pi_t, L^\pi_t)$ satisfies  (\ref{21}). It is easy to see from
the definition of $\Pi_b$ that
\begin{eqnarray}\label{flemma3}
\left\{
\begin{array}{l l l}
\mathbf{P}\{ R_{s-}^\pi\geq R_{s}^\pi\geq b\}+\mathbf{P}\{b\geq
R_{s-}^\pi\geq R_{s}^\pi\} =1,
\ \forall s\geq 0 ,\\
\mathcal {L}g(R_s^\pi)\leq 0 \  \mbox{for $ s\leq \tau ^\pi _x
=\inf\{t\geq 0: R^\pi_s\leq 0\} $},\\
g'(x)=1 \ \mbox{for $x\geq b$}.
\end{array}\right.
\end{eqnarray}
By using (\ref{flemma3}), we have (\ref{346}) with replacing $f$ by
$g$. Then by the same way as in (i),
$$ V(x,b)\leq g(x).$$
Choosing the policy $\pi_{b}^\ast=\{
 A^*_b(R^{\pi_{b}^\ast}_\cdot),L^{\pi_{b}^\ast}_\cdot\}$,
 which is uniquely determined by SDE(\ref{343}),
yields that the last inequality becomes equality. Thus the proof is
complete.
\end{proof}
\vskip 10pt \noindent Now we give the proof of main result (Theorem
\ref{theorem33}) of this paper. \vskip 10pt \noindent
\begin{proof}
If $\mathbf{P}[\tau_{b_0}^{\pi^*_{b_0}}\leq T]\leq\varepsilon$, then
the conclusion follows from the part (i) of Theorem \ref{theorem61}.
\vskip 7pt\noindent If  $ \mathbf{P}[\tau_{b_0}^{\pi^*_{b_0}}\leq
T]>\varepsilon $, then, by  Theorem \ref{lemma51}, Lemmas
\ref{lemma52} and \ref{lemma53}, the equation
$\mathbf{P}\{\tau_{b}^{\pi^*_{b}}\leq T\}=\varepsilon $ has a unique
solution $b^\ast$ and $b^\ast=\inf\{ b:  b\in \mathfrak{B} \}      >
b_0$. By Theorem \ref{theorem61} and Corollary A.2 in appendix,
$g(x, b)=V(x,b)$ is decreasing w.r.t.$ b(\geq b_0)$, so (\ref{313})
follows from the part (ii) of Theorem \ref{theorem61}. Moreover,
$g(x, b^*)(=V(x, b^*))$ is the value function of the company, the
optimal policy associated with $g(x)= g(x, b^*)$ is
 $\pi_{b^*}^\ast=\{
 A^*_{b^*}(R^{\pi_{b^*}^\ast}_\cdot),L^{\pi_{b^*}^\ast}_\cdot\}$
which is uniquely determined by SDE(\ref{343}). The inequality
(\ref{316}) is a direct consequence of Corollary A.2 in appendix.
Thus we complete the proof.
\end{proof}
\vskip 10pt \noindent
 \setcounter{equation}{0}
\section{ {\bf Proof of Lemmas and Proposition}}
 \vskip 10pt\noindent
 ${\sl Proof of Lemma \ \ref{lemma52}}$.
We only prove  Lemma \ref{lemma52} in case of $\mu < \lambda < 2\mu
$ because other case can be treated similarly. We prove that the
probability of bankruptcy is strictly decreasing on $[m ,b_K]$, that
is,
\begin{eqnarray*}
\mathbf{P}[\tau_{b_1}^{b_1}\leq T]-\mathbf{P}[\tau_{b_2}^{b_2}\leq
T]>0
\end{eqnarray*}
 for any $b_2>b_1\geq m $.
By comparison theorem,
$$ \mathbf{P}[\tau_{b_1}^{b_1}\leq T]-\mathbf{P}[\tau_{b_2}^{b_2}\leq T]\geq
\mathbf{P}[\tau_{b_1}^{b_2}\leq T]-\mathbf{P}[\tau_{b_2}^{b_2}\leq
T].
$$
The proof can be reduced to proving that
\begin{eqnarray}\label{59}
\mathbf{P}[\tau_{b_1}^{b_2}\leq T]-\mathbf{P}[\tau_{b_2}^{b_2}\leq
T]>0.
\end{eqnarray}
\vskip 10pt \noindent To prove the inequality (\ref{59}) we define
stochastic processes $R_t^{[1]}$ and $R_t^{[2]}$ by the following
SDEs:
\begin{eqnarray*}
dR_t^{[1]}=[A_{b_2}^*(R_t^{[1]})\lambda-\delta]dt+A_{b_2}^*(R_t^{[1]})\sigma
d\mathcal {W}_t-dL_t^{b_2},R_0^{[1]}=b_1,
\end{eqnarray*}
\begin{eqnarray*}
dR_t^{[2]}=[A_{b_2}^*(R_t^{[2]})\lambda-\delta]dt+A_{b_2}^*(R_t^{[2]})\sigma
d\mathcal{W}_t-dL_t^{b_2},R_0^{[2]}=b_2,
\end{eqnarray*}
respectively.\vskip 10pt \noindent Let
$\tau^{b_1}=\inf\limits_{t\geq 0}\{t:R_t^{[2]}=b_1\}$, $A= \{
\tau^{b_1}\leq T\}$ and $B=\big\{\sl R_{t}^{[2]}$ will go to
bankruptcy in a time interval $[\tau^{b_1},\tau^{b_1}+T]$ and
$\tau^{b_1}\leq T \big \}$. Then $\{\tau_{b_2}^{b_2}\leq T\}\subset
B$. Moreover, by using strong Markov property of $R_t^{[2]}$, we
have
\begin{eqnarray*}
\mathbf{P}[\tau_{b_1}^{b_2}\leq T]=\mathbf{P}[B|A].
\end{eqnarray*}
So
\begin{eqnarray*}
 \mathbf{P}[\tau_{b_1}^{b_2}\leq T]-\mathbf{P}[\tau_{b_2}^{b_2}\leq T]
 &\geq &
\mathbf{P}[\tau_{b_1}^{b_2}\leq T]-\mathbf{P}(B)\nonumber\\
&=& \mathbf{P}[\tau_{b_1}^{b_2}\leq T]-\mathbf{P}(A)\mathbf{P}(B|A)\nonumber\\
&=&\mathbf{P}[\tau_{b_1}^{b_2}\leq T](1-\mathbf{P}(A))\nonumber\\
&=&\mathbf{P}[\tau_{b_1}^{b_2}\leq T]\mathbf{P}(A^c).
\end{eqnarray*}
By Theorem \ref{theorem41}, $\mathbf{P}[\tau_{b_1}^{b_2}\leq T]>0$.
Hence we only need to prove  $\mathbf{P}(A^c)>0$.  For doing this we
define stochastic processes $R_t^{[3]}$ and $R_t^{[4]}$ by the
following SDEs:
\begin{eqnarray*}
\left\{
\begin{array}{l l l}
dR_t^{[3]}=[A_{b_2}^*(R_t^{[3]})\lambda-\delta]dt+A_{b_2}^*(R_t^{[3]})\sigma
d\mathcal{W}_t-dL_t^{b_2},\\
R_0^{[3]}=\frac{b_1+b_2}{2},
 \end{array}\right.
\end{eqnarray*}
\begin{eqnarray*}
\left\{
\begin{array}{l l l}
dR_t^{[4]}=[A_{b_2}^*(R_t^{[4]})\lambda-\delta]dt+A_{b_2}^*(R_t^{[4]})\sigma
d{W}_t,\\
R_0^{[4]}=\frac{b_1+b_2}{2}.
 \end{array}\right.
\end{eqnarray*}
Setting $D=\{\inf\limits_{0\leq t\leq T}R_t^{[3]}>b_1\}$ and
    $E=\{\inf\limits_{0\leq t\leq T}R_t^{[4]}>b_1,\sup\limits_{0\leq
t\leq T}R_t^{[4]} <b_2\}$, by comparison theorem on SDE, we have
$\mathbf{P}(A^c)\geq \mathbf{P}(D)\geq \mathbf{P}(E)$. Since
$A_{b_2}^*(x)=1$ for any $x>m$,
\begin{eqnarray}\label{510}
R_t^{[4]}= \frac{b_1+b_2}{2}+ [\lambda-\delta]t+\sigma {W}_t\ \mbox{
on $ E$}.
\end{eqnarray}
We deduce from (\ref{510}) and properties of Brownian motion with
drift  (cf. Borodin and Salminen \cite{s110} (2002)) that
\begin{eqnarray*}
\mathbf{P}(E)=\frac{e^{-\mu'^2T/2}}{\sqrt{2\pi
T}}\sum_{k=-\infty}^{\infty}\int_{b_1/\sigma}^{b_2/\sigma}e^{\mu'(z-x)}
[(e^{-(z-x+\frac{2k(b_2-b_1)}{\sigma})^2/2T})\\
-(e^{-(z+x-\frac{2b_1-2k(b_2-b_1)}{\sigma})^2/2T})]dz>0,\nonumber
\end{eqnarray*}
where $\mu'=(\lambda-\delta)/\sigma$ and
$x=\frac{b_1+b_2}{2\sigma}$. Thus the proof follows. \qquad $\Box$
\vskip 15pt\noindent ${\sl Proof of Proposition\
\ref{proposition51}}$.\qquad Let $\phi(t,y)\equiv \phi^{b}(t,y)$.
Since the stochastic process $( R^{\pi_{b}^\ast,y}_t,
L_t^{\pi_{b}^\ast} )$ is continuous, by applying  the generalized
It\^{o} formula to $(R^{\pi_{b}^\ast,y}_t, L_t^{\pi_{b}^\ast} )$ and
$\phi(t,y)$, we have for $0<y\leq b$
\begin{eqnarray}\label{423}
\phi(T-(t\wedge\tau_{y}^{b}),R^{\pi_{b}^\ast,y}_{t\wedge\tau_{y}^{b}})
&=&\phi(T,y)\nonumber\\
&+&\int_{0}^{t\wedge\tau_{y}^{b}}(\frac{1}{2}A_b^{*2}(R^{\pi_{b}^\ast,y}_{s})
\sigma^{2}\phi_{yy}(T-s,R^{\pi_{b}^\ast,y}_{s})\nonumber\\
&+&(\lambda A^*_b(R^{\pi_{b}^\ast,y}_{s})-\delta)
\phi_{y}(T-s,R^{\pi_{b}^\ast,y}_{s})\nonumber\\
&-&\phi_{t}(T-s,
R^{\pi_{b}^\ast,y}_{s}))ds-\int_{0}^{t\wedge\tau_{y}^{b}}
\phi_{y}(T-s,R^{\pi_{b}^\ast,y}_{s})dL_{s}^{b}\nonumber\\
&+&\int_{0}^{t\wedge\tau_{y}^{b}}
A_b^*(R^{\pi_{b}^\ast,y}_{s})\sigma\phi_{y}(T-s,R^{\pi_{b}^\ast,y}_{s})dW_{s},
\end{eqnarray}
where $ \tau_{y}^{b}\equiv\tau^{\pi_{b}^\ast}_y= \inf\{t: R^{\pi_{b}^\ast,y}_t =0\}    $.\\
 Letting $t=T$ and taking mathematical expectation at both
sides of (\ref{423}) yield that
\begin{eqnarray*}
\phi(T,y)&=&\mathbf{E}[\phi(T-(T\wedge\tau_{y}^{b}),
R^{\pi_{b}^\ast,y}_{T\wedge\tau_{y}^{b}}
)]\nonumber\\
&=&\mathbf{E}[\phi(0, R^{\pi_{b}^\ast,y}_{T})1_{T<\tau_{y}^{b}}]+
\mathbf{E}[\phi(T-\tau_{y}^{b},0)1_{T\geq\tau_{y}^{b}})]\nonumber\\
&=&\mathbf{E}[1_{T<\tau_{y}^{b}}]=1-\psi(T,y).
\end{eqnarray*}\qquad
$\Box$ \vskip 15pt\noindent
 Finally, we will  use PDE method to prove that the probability of bankruptcy
$\psi^{b}(T, b)\:={\bf P}\big \{ \tau_{b}^{\pi^*_b}< T \big\} $ is
continuous w.r.t. $b$. \\
\vskip 15pt\noindent
 ${\sl Proof of Lemma \ \ref{lemma53}}$.\quad
 It suffices to prove that $\phi^{b}(t,x)$ is continuous in $ b $.
Let  $y=bz$ and  $\theta^{b}(t,z)=\phi^{b}(t,by)$, the equation
(\ref{422}) becomes
\begin{eqnarray*}
\left\{
\begin{array}{l l l}
\theta_t^{b}(t,z)=[a(bz)/b^2]\theta_{zz}^{b}(t,z)+[\mu(bz)/b]\theta_z^{b}(t,z),\\
\theta^{b}(0,z)=1, \ \mbox{for}\  0<z\leq 1, \\
\theta^{b}(t,0)=0,\theta_{z}^{b}(t,1)=0,\ \mbox{for} \  t>0.
\end{array}\right.
\end{eqnarray*}
So the proof of Lemma \ref{lemma53} reduces to proving
$\lim\limits_{b_2\rightarrow
b_1}\theta^{b_2}(t,z)=\theta^{b_1}(t,z)$ for fixed $b_1>b_0$.
Setting $w(t,z)=\theta^{b_2}(t,z)-\theta^{b_1}(t,z)$, we  have
\begin{eqnarray}\label{424}
\left\{
\begin{array}{l l l}
w_t(t,z)&=&[a(b_2z)/b_2^2]w_{zz}(t,z)+[\mu(b_2z)/b_2]w_{z}(t,z)\\
&+& \{a(b_2z)/b_2^2-a(b_1z)/b_1^2\}\theta_{zz}^{b_1}(t,z)\\
&+&\{a(b_2z)/b_2^2-a(b_1z)/b_1^2\}\theta_z^{b_1}(t,z),\\
w(0,z)&=&0,\ \mbox{for}\ 0<z\leq 1,\\
w(t,0)&=&0,\ w_x(t,1)=0,\ \mbox{for}\ t>0.
\end{array}\right.
\end{eqnarray}
By multiplying the first equation in (\ref{424}) by $w(t,z)$ and
then integrating  on $ [0,t]\times [0,1]$,
\begin{eqnarray}\label{425}
&&\int_0^{t}\int_0^1 w(s,x)w_t(s,x)dxds
\nonumber\\
&=&\int_0^{t}\int_0^1 \big\{[a(b_2x)/b_2^2]
w(s,x)w_{xx}(s,x)\nonumber\\
&+&[\mu(b_2x)/b_2]w(s,x)w_x(s,x)\nonumber\\
 &+&[a(b_2x)/b_2^2-a(b_1x)/b_1^2]w(s,x)\theta_{xx}^{b_1}(t,x)\nonumber\\
  &+&w(s,x)[\mu(b_2x)/b_2-\mu(b_1x)/b_1]w(s,x)\theta_x^{b_1}(t,x)\big \}dxds\nonumber\\
&\equiv & E_1 +E_2+E_3+E_4.
\end{eqnarray}
We now estimate terms $E_i $, $i=1,\cdots, 4$, at both sides of
(\ref{425}) as follows. \vskip 5pt \noindent Firstly,
\begin{eqnarray}\label{426}
\int_0^{t}\int_0^1 w(s,x)w_t(s,x)dxds&=&\int_0^1
\frac{1}{2}w^2(t,x)dx.
\end{eqnarray}
Secondly, by Corollary A.1 in appendix and  definitions of $a(x)$
and $\mu (x) $, there exit positive constants $D_1 $, $D_2  $ and $
D_3 $ such that $[\mu(b_2z)/b_2]^2\leq D_1$, $[a(bx)/b^2]'\geq 0$,
$[a(b_2x)/b_2^2]\geq D_2$ and  $[a(b_2x)/b_2^2]'\leq D_3$, so by
Young's inequality, we have for any $\lambda_1>0$ and $\lambda_2>0$
\begin{eqnarray}\label{427}
E_1&=&\int_0^{t}\int_0^1[a(b_2x)/b_2^2]w(s,x)w_{xx}(s,x)dxds\nonumber\\
&=&-\int_0^{t}\int_0^1 [a(b_2x)/b_2^2]w_x^2(s,x)dxds\nonumber\\
&&-
\int_0^{t}\int_{0}^{m/b_2}[a(b_2x)/b_2^2]^{'}w_x(s,x)w(s,x)dxds\nonumber\\
&\leq & -D_2\int_0^{t}\int_0^1
w_x^2(s,x)dxds\nonumber\\
&& +D_3\int_0^{t}\int_0^1 [\lambda_1 w_x^2(s,x)+\frac{1}{4\lambda_1}
w^2(s,x)]dxds
\end{eqnarray}
and
\begin{eqnarray}\label{428}
E_2&=&\int_0^{t}\int_0^1[\mu(b_2x)/b_2]w(s,x)w_x(s,x)dxds\nonumber\\
&\leq
&\lambda_2\int_0^{t}\int_0^1 w_x^2(s,x)dxds\nonumber\\
&&+\frac{D_1}{4\lambda_2}\int_0^{t}\int_0^1 w^2(s,x)dxds.
\end{eqnarray}
Thirdly, it is easy to see from  Corollary A.1 in appendix that
$[a(bx)/b^2]$, $[a(bx)/b^2]'$and $[\mu(bx)/b]$ are Lipschitz
continuous for all $x\in(x_1/b_2,x_2/b_1)$, that is, there exists an
$L>0$ such that
\begin{eqnarray}\label{518}
\left\{
\begin{array}{l l l}
|[a(b_2x)/b_2^2]-[a(b_1x)/b_1^2]|\leq L|b_2-b_1|,\\
|[a(b_2x)/b_2^2]'-[a(b_1x)/b_1^2]'|\leq L|b_2-b_1|,\\
|[\mu(b_2x)/b_2]-[\mu(b_1x)/b_1]|\leq L|b_2-b_1|.
\end{array}\right.
\end{eqnarray}
 Noting that $E_3$ has  the following expressions:
\begin{eqnarray}\label{429}
E_3&=&\int_0^t\int_0^1 \{a(b_2x)/b_2^2-a(b_1x)/b_1^2\}w(s,x)\theta_{xx}^{b_1}(s,x)dxds\nonumber\\
&=&-\int_0^t\int_0^1 \{a(b_2x)/b_2^2-a(b_1x)/b_1^2\}w_x(s,x)\theta_{x}^{b_1}(s,x)dxds\nonumber\\
&&-\int_0^t\{\int_0^{m/b_2}\{a(b_2x)/b_2^2-a(b_1x)/b_1^2\}'w(s,x)\theta_{x}^{b_1}(s,x)dx\nonumber\\
&&-\int_{m/b_2}^{m/b_1}\{a(b_2x)/b_2^2-a(b_1x)/b_1^2\}'w(s,x)\theta_{x}^{b_1}(s,x)dx\nonumber\\
&&-\int_{m/b_1}^{1}\{a(b_2x)/b_2^2-a(b_1x)/b_1^2\}w(s,x)'\theta_{x}^{b_1}(s,x)dx\}ds\nonumber\\
&=&-\int_0^t\int_0^1 \{a(b_2x)/b_2^2-a(b_1x)/b_1^2\}w_x(s,x)\theta_{x}^{b_1}(s,x)dxds\nonumber\\
&&-\int_0^t\{\int_0^{m/b_2}\{a(b_2x)/b_2^2-a(b_1x)/b_1^2\}'w(s,x)\theta_{x}^{b_1}(s,x)dx\nonumber\\
&&-\int_{m/b_2}^{m/b_1}\{a(b_2x)/b_2^2-
a(b_1x)/b_1^2\}'w(s,x)\theta_{x}^{b_1}(s,x)\}dsdx\nonumber\\
&=& E_{31}+E_{32}+E_{33},
\end{eqnarray}
\begin{eqnarray*} \lim_{b_2\rightarrow b_1}\{|E_{33}|\}=0,
\end{eqnarray*}
and using (\ref{518}),
  by
the same way as in (\ref{427}) and (\ref{428}), we have for any
$\lambda_3>0$ and $\lambda_4>0$
\begin{eqnarray*}
E_{31}&=& -\int_0^t\int_{0}^{1}
\{a(b_2x)/b_2^2-a(b_1x)/b_1^2\}w(s,x)\theta_{x}^{b_1}(s,x)dxds\\
&\leq &\frac{L^2(b_2-b_1)^2}{4\lambda_3}\int_0^t\int_0^1
[\theta_{x}^{b_1}(s,x)]^2dxds\\
&& +\lambda_3\int_0^t\int_0^1 [w_x^2(s,x)+w^2(s,x)]dxds
\end{eqnarray*}
and
\begin{eqnarray*} E_{32}&=& -\int_0^t\int_{0}^{m/b_2}
\{a(b_2x)/b_2^2-a(b_1x)/b_1^2\}'w(s,x)\theta_{x}^{b_1}(s,x)dxds\\
&\leq &\frac{L^2(b_2-b_1)^2}{4\lambda_4}\int_0^t\int_0^1
[\theta_{x}^{b_1}(s,x)]^2dxds\\
&& +\lambda_4\int_0^t\int_0^1 [w_x^2(s,x)+w^2(s,x)]dxds.
\end{eqnarray*}
By Corollary A.1 in appendix, there exists a constant $D_4>0$ such
that $|[a(bz)/b^2]'-[\mu(bx)/b]|\leq D_4$ and
$\lambda_5=\inf\limits_{b_1\leq b\leq b_2}\{a(bx)/b^2\}>0$. Then, by
the boundary conditions, we estimate $\int_0^t\int_0^1
[\theta_{x}^{b}(s,x)]^2dxds$ for $b\in [b_1,b_2]$ as follows:
\begin{eqnarray*}
0&=&\int_0^t\int_0^1 \theta_{t}^{b}(s,x)\theta^{b}(s,x)\\
&& -[a(bx)/b^2]\theta_{xx}^{b}(s,x)\theta^{b}(s,x)
-[\mu(bx)/b]\theta_{x}^{b}(s,x)\theta^{b}(s,x)dxds\\
&=&\frac{1}{2}\int_0^1 [\theta^{b}(s,x)]^2dx
+\int_0^t\int_0^1 [a(bx)/b^2] [\theta_{x}^{b}(s,x)]^2dxds\\
&&+\int_0^t\int_0^1
[a(bx)/b^2]'[\theta_{x}^{b}(s,x)][\theta^{b}(s,x)]dxds\\
&&-\int_0^t\int_0^1 [\mu(bx)/b][\theta_{x}^{b}(s,x)][\theta^{b}(s,x)]dxds\\
&\geq & \lambda_5 \int_0^t\int_0^1 [\theta_{x}^{b}(s,x)]^2dxds
-\frac{\lambda_5}{2}\int_0^t\int_0^1 [\theta_{x}^{b}(s,x)]^2dxds\\
&&-\frac{1}{2\lambda_5}\int_0^t\int_0^1 [\theta^{b}(s,x)]^2dxds\\
&\geq &\frac{\lambda_5}{2}\int_0^t\int_0^1
[\theta_{x}^{b}(s,x)]^2dxds -\frac{D_4}{2\lambda_5}
\end{eqnarray*}
from which we see that
\begin{eqnarray*}
\int_0^t\int_0^1 [\theta_{x}^{b}(s,x)]^2dxds\leq
\frac{\lambda_5^2}{D_4}.
\end{eqnarray*}
Therefore we conclude that there exists a positive function
$B^{b_1}(b_2)$ such that
\begin{eqnarray*}
\lim\limits_{b_2\rightarrow b_1}B^{b_1}(b_2)=0
\end{eqnarray*}
and for $ 0\leq t\leq T$
\begin{eqnarray}\label{430}
E_3&=&\int_0^t\int_{0}^{1}\{a(b_2x)/b_2^2-a(b_1x)/b_1^2\}w(s,x)\theta_{xx}^{b_1}(t,x)dxds\nonumber\\
&\leq & B^{b_1}(b_2)+(\lambda_3 + \lambda_4 )\int_0^t\int_0^1
[w_x^2(s,x)+w^2(s,x)]dxds.
\end{eqnarray}
Finally, by using the same way as in estimating $E_3$, we can find a
positive function $B_1^{b_1}(b_2)$ such that
\begin{eqnarray*}
\lim\limits_{b_2\rightarrow b_1}B_1^{b_1}(b_2)=0
\end{eqnarray*}
and for any $\lambda_6 >0$
\begin{eqnarray}\label{431}
E_4&=&\int_0^t\int_0^1 \{\mu(b_2x)/b_2-\mu(b_1x)/b_1\}w(s,x)
\theta_x^{b_1}(t,x)dxds\nonumber\\
&\leq &  B_1^{b_1}(b_2)+\lambda_6\int_0^t\int_0^1 w^2(s,x)dxds.
\end{eqnarray}
By choosing  $\lambda_1$, $\lambda_2$, $\lambda_3$ and $\lambda_4$
such that $\lambda_2+ \lambda_1 D_3+ \lambda_3+ \lambda_4\leq D_2$ ,
it see from  (\ref{425}), (\ref{427})-(\ref{431}) that there exist
positive constants $C_1$ and $C_2$ such that
\begin{eqnarray*}
\int_0^1w^2(t,x)dx\leq
C_1\int_0^t\int_0^1w^2(s,x)dxds+C_2[B_1^{b_1}(b_2)+B^{b_1}(b_2)].
\end{eqnarray*}
By setting $F(t)=\int_0^t\int_0^1w^2(s,x)dxds$ and using the
Gronwall inequality, we get
\begin{eqnarray*}
F(t)\leq C_2[B_1^{b_1}(b_2)+B^{b_1}(b_2)]\exp\{C_1t\}.
\end{eqnarray*}
So
\begin{eqnarray*}
\lim\limits_{b_2\rightarrow
b_1}\int_0^t\int_0^1[\theta^{b_2}(s,x)-\theta^{b_1}(s,x)]^2dxds=0.
\end{eqnarray*}
Thus the proof has been done. \qquad $\Box$ \vskip 10pt\noindent
 \setcounter{equation}{0}
\section{ {\bf Appendix}}
 \vskip 10pt\noindent
The appendix lists the solutions of the two HJB equations and
properties of them. Since the procedure of solving the two equations
 is completely similar to that of Taksar  and Zhou\cite{ime04}(1998),
we omit it.
\begin{LA}\label{theorem31} Assume that
 $f(x)\in C^{2}$ satisfies the following HJB equation and boundary
conditions:
\begin{eqnarray}\label{31}
\left\{
\begin{array}{l l l}
\max\limits_{a\in[0,1]}[\frac{1}{2}\sigma^{2}a^{2}f^{''}(x)+(\mu-(1-
a)\lambda)f^{'}(x)-cf(x)]=0,\ 0\leq x\leq b_{0},\\
f^{'}(x)=1, \ \mbox{ for}\   x\geq b_{0},\\
f^{''}(x)=0, \ \mbox{ for} \  x\geq b_{0},\\
f(0)=0.
\end{array}\right.
\end{eqnarray}
(i) If $ \lambda \geq 2\mu $, then
\begin{eqnarray}\label{32} f(x)=\left\{
\begin{array}{l l l} f_1(x,b_0)=C_0(b_0)(e^{\zeta_{1}x}-e^{\zeta_{2}x}),&x\leq b_0,\\
f_2(x, b_0)=C_0(b_0)(e^{\zeta_{1}b_0}-e^{\zeta_{2}b_0})+x-b_0,
&x\geq b_0.
\end{array} \right.
\end{eqnarray}
If $ \mu <\lambda < 2\mu $, then
\begin{eqnarray}\label{33} f(x)=\left\{
\begin{array}{l l l}f_3(x,b_0)= \int_{0}^{x}X^{-1}(y)dy,\ x\leq m,\\
f_4(x, b_0)=\frac{C_1(b_0)}{\zeta_1}\exp{(\zeta_1(x-m))}
+\frac{C_2(b_0)}{\zeta_2}\exp{(\zeta_2(x-m))},\\ \ m<x<b_0, \\
f_5(x, b_0)=\frac{C_1(b_0)}{\zeta_1}\exp{(\zeta_1(b_0-m))}
+\frac{C_2(b_0)}{\zeta_2}\exp\{\zeta_2(b_0-m)\}\\
\qquad \qquad \quad + x-b_0, \ x\geq b_0.
\end{array} \right.
\end{eqnarray}
(ii)
\begin{eqnarray}\label{34}
\left\{
\begin{array}{l l l}
\max\mathcal {L}f(x)\leq 0\ \mbox{and }\  f^{'}(x)\geq 1 \
\mbox{for}\ x \geq
0,\\
f(0)=0,
\end{array}\right.
\end{eqnarray}
where   $\mathcal {L}= \frac{1}{2}\sigma^{2}a^{2}\frac{d^{2}}{d
x^{2}}+(\mu-(1- a)\lambda)\frac{d}{d x}-c$. \\
(iii) Let $A^*(x) $ is  the maximizer of the expression on the
left-hand side of (\ref{31}). \\
 If $ \lambda \geq 2\mu $, then  $A^*(x)=1 $
for $x\geq 0 $. If $ \mu <\lambda < 2\mu $, then
\begin{eqnarray}\label{35}
A^*(x)=A(x, b_0):=\left\{
\begin{array}{l l l}
-\frac{\lambda}{\sigma^2}(X^{-1}(x))X^{'}(X^{-1}(x)),&x\leq
m,\\1,&x>m,
\end{array} \right.
\end{eqnarray}
\vskip 10pt \noindent
 where $ X^{-1}$ denotes the inverse function
of $X(z)$.
\begin{eqnarray*}
&&\zeta_{1}=\frac{-\mu+\sqrt{\mu^2+2\sigma^2c}}{\sigma^2}, \quad
\zeta_{2}=\frac{-\mu-\sqrt{\mu^2+2\sigma^2c}}{\sigma^2},\\
&&b_0= 2\frac{\ln |\zeta_{2}/\zeta_{1}|}{\zeta_2-\zeta_1  },\quad
C_0(b_0)=\frac{1}{\zeta_1e^{\zeta_{1}b_0}-\zeta_2e^{\zeta_{2}b_0}},\Delta=b_0-m,\\
&&z_1=z_1(b_0)=\frac{\zeta_1-\zeta_2}
{(-\zeta_2-\lambda/\sigma^2)e^{\zeta_1\Delta}+(\zeta_1
+\lambda/\sigma^2)e^{\zeta_2\Delta}},\\
&& C_1(b_0)=z_1\frac{-\zeta_2-(\lambda/\sigma^2)}{\zeta_1-\zeta_2},
\quad
C_2(b_0)=z_1\frac{\zeta_1+(\lambda/\sigma^2)}{\zeta_1-\zeta_2},\\
&&C_3(b_0)=z_{1}^{1+c/\alpha}\frac{\lambda(c+\alpha(2\mu/\lambda-1))}
{2(\alpha+c)^2},\quad \alpha=\frac{\lambda^2}{2\sigma^2},\\
&&C_4(b_0) =-\frac{(\lambda-\mu)c}{(\alpha+c)^2}
+\frac{(\lambda-\mu)\alpha}{(\alpha+c)^2}\ln{C_3(b_0)}
+\frac{(\lambda-\mu)\alpha}{(\alpha+c)^2}\ln{\frac{(\alpha+c)^2}{(\lambda-\mu)c}},\\
&&X(z)=C_3(b_0)z^{-1-c/\alpha}+C_{4}(b_0)-\frac{\lambda-\mu}{\alpha+c}\ln{z},
\ \forall z>0,\quad m(b_0)= X(z_1).
\end{eqnarray*}
\end{LA}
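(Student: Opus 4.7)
The plan is to reduce the HJB to an ordinary differential equation by carrying out the pointwise maximization in $a\in[0,1]$ inside the bracket of (\ref{31}). Under the natural concavity ansatz $f''\leq 0$, the bracketed expression is concave in $a$, so its unconstrained maximizer is $a^\dagger(x)=-\lambda f'(x)/(\sigma^2 f''(x))$, giving the constrained maximizer $A^*(x)=\min\{a^\dagger(x),1\}$. Accordingly the interval $[0,b_0]$ splits into a \emph{nonlinear region} where $A^*<1$ (so that substituting $a^\dagger$ back produces a fully nonlinear first-order ODE in $f$) and a \emph{linear region} where $A^*\equiv 1$ (so that the linear ODE $\tfrac{1}{2}\sigma^2 f''+\mu f'-cf=0$ applies). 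Which of the two regions is actually present is controlled by the sign of $\lambda-2\mu$.

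For $\lambda\geq 2\mu$ I expect $A^*\equiv 1$ on all of $[0,b_0]$. Solving the linear ODE and imposing $f(0)=0$ forces $f(x)=C_0(e^{\zeta_1 x}-e^{\zeta_2 x})$, with $\zeta_1,\zeta_2$ the roots of $\tfrac{\sigma^2}{2}\zeta^2+\mu\zeta-c=0$. The two smooth-fit conditions $f'(b_0)=1$ and $f''(b_0)=0$ then furnish two equations in the two unknowns $C_0$ and $b_0$; the second gives $e^{(\zeta_1-\zeta_2)b_0}=(\zeta_2/\zeta_1)^2$ and hence the explicit formula for $b_0$, and the first determines $C_0(b_0)$. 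To confirm $A^*=1$ is consistent I would check $\sigma^2 f''+\lambda f'\geq 0$ on $[0,b_0]$: using the characteristic identity to replace $\sigma^2\zeta_i^2$ by $2c-2\mu\zeta_i$, this reduces to the sign of $2c+(\lambda-2\mu)\zeta_i$, which is nonnegative precisely because $\lambda\geq 2\mu$.

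For $\mu<\lambda<2\mu$ a genuine threshold $m\in(0,b_0)$ appears. On $[m,b_0]$ the linear analysis above applies and gives $f_4(x)=\tfrac{C_1}{\zeta_1}e^{\zeta_1(x-m)}+\tfrac{C_2}{\zeta_2}e^{\zeta_2(x-m)}$ after shifting the origin to $m$. On $[0,m]$, substituting $a^\dagger$ into the HJB yields the nonlinear ODE
\[
-\frac{\lambda^2(f')^2}{2\sigma^2 f''}+(\mu-\lambda)f'-cf=0,
\]
which I would linearize by the classical Legendre-type inversion: take $z=f'(x)$ as a new independent variable and let $X(z)$ be the inverse of $x\mapsto f'(x)$, so $X'(z)=1/f''(x)$. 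The ODE becomes a linear first-order ODE in $X(z)$ whose integration produces $X(z)=C_3 z^{-1-c/\alpha}+C_4-\tfrac{\lambda-\mu}{\alpha+c}\ln z$ with $\alpha=\lambda^2/(2\sigma^2)$. The threshold is located by $a^\dagger(m^-)=1$, i.e.\ $\sigma^2 f''(m)=-\lambda f'(m)$, giving a relation for $z_1:=f'(m)$; combined with continuity of $f$ and $f'$ at $m$, the conditions $f(0)=0$, $f'(b_0)=1$, $f''(b_0)=0$ produce a closed-form system whose solution is the explicit collection $\{C_1,C_2,C_3,C_4,z_1,m,b_0\}$ listed in the lemma.

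Part (ii) is then obtained by extending $f$ linearly as $f(x)=f(b_0)+(x-b_0)$ for $x\geq b_0$: on this half-line $f'\equiv 1$, $f''\equiv 0$, the HJB bracket simplifies to $\lambda a+\mu-\lambda-cf(x)$, is maximized at $a=1$ with value $\mu-cf(x)$, and this is non-positive iff $f(b_0)\geq \mu/c$, which is exactly the variational principle pinning $b_0$ down as the optimal barrier and falls out of the explicit formulas. On $[0,b_0]$ equality in the HJB holds by construction, and $f'(x)\geq 1$ on $[0,\infty)$ follows from concavity together with $f'(b_0)=1$. Part (iii) is a direct readout: $A^*\equiv 1$ on $[m,\infty)$ (or on $[0,\infty)$ in the first case), while on $[0,m)$ the chain rule rewrites $a^\dagger(x)=-\lambda f'(x)/(\sigma^2 f''(x))$ as $-(\lambda/\sigma^2)X^{-1}(x)X'(X^{-1}(x))$, matching (\ref{35}). \textbf{The main obstacle} is the nonlinear region $[0,m]$: one must recognize the inversion that linearizes the first-order ODE, integrate it in closed form, and then solve the coupled matching/smooth-fit system at $m$ and $b_0$ simultaneously while verifying a posteriori that the resulting $f$ is indeed concave and satisfies $f'\geq 1$ throughout, which is where all the constants in the lemma are ultimately determined.
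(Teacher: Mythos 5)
Your proposal is correct and reproduces the standard Taksar--Zhou (1998) construction, which is precisely what the paper defers to: the appendix explicitly states that the solution procedure is ``completely similar to that of Taksar and Zhou'' and omits the proof. One small imprecision worth noting: after the substitution $z=f'(x)$ the equation $-\alpha z^2 X'(z)+(\mu-\lambda)z-cf(X(z))=0$ still contains $f(X(z))$, so you must differentiate once more in $z$ to obtain a first-order linear ODE in $X'(z)$ (not in $X(z)$ as you wrote), and then integrate twice to recover the stated closed form of $X$.
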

\begin{LA}\label{theorem32}
Let $b>b_{0}$. Assume that $g\in C^{1}(R_+) \cap g\in
C^{2}(R_+\setminus \{b\})$ satisfies the following HJB equation and
boundary conditions:
\begin{eqnarray}\label{36}
\left\{
\begin{array}{l l l}
\max\limits_{a\in[0,1]}[\frac{1}{2}\sigma^{2}a^{2}g^{''}(x)+(\mu-(1-
a)\lambda)g^{'}(x)-cg(x)]=0,\ 0\leq x\leq b,\\
g^{'}(x)=1, \ \ \mbox{ for}\ \   x\geq b,\\
g^{''}(x)=0, \ \ \mbox{ for} \ \  x>b,\\
g(0)=0.
\end{array}\right.
\end{eqnarray}
(i) If $ \lambda \geq 2\mu $, then
\begin{eqnarray}\label{37} g(x)=\left\{
\begin{array}{l l l} f_1(x,b),&x\leq b,\\
f_2(x, b), &x\geq b.
\end{array} \right.
\end{eqnarray}
If $ \mu <\lambda < 2\mu $, then
\begin{eqnarray}\label{38} g(x)=\left\{
\begin{array}{l l l}f_3(x,b),\ x\leq m(b),\\
f_4(x, b), \ m(b)<x<b, \\
f_5(x, b), \ x\geq b.
\end{array} \right.
\end{eqnarray}
(ii)\begin{eqnarray}\label{39} \left\{
\begin{array}{l l l}
\max\mathcal {L}g(x)\leq 0,\  \mbox{for}\ \ x \geq
0,\\
g^{'}(x)\geq 1, \ \mbox{for}\ \  x \geq
b,\\
g(0)=0,
\end{array}\right.
\end{eqnarray}
where   $\mathcal {L}= \frac{1}{2}\sigma^{2}a^{2}\frac{d^{2}}{d
x^{2}}+(\mu-(1- a)\lambda)\frac{d}{d x}-c$. \\
(iii) Let $A^*(x) $ is  the maximizer of the expression on the
left-hand side of (\ref{36}).  \\  If $ \lambda \geq 2\mu $,then
$A^*(x)=1 $ for $x\geq 0 $. If $ \mu <\lambda < 2\mu $, then
\begin{eqnarray}\label{310}
A^*(x)=A(x, b).
\end{eqnarray}
{\sl Remark: Since $b\neq b_0 $, $g''(b)$ may not exist. We denote
by $g''(b)$ the  $\lim\limits_{x\uparrow b }g''(x) $ here.}
\end{LA}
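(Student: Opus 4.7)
The plan is to treat Lemma A.2 as a verification problem paralleling Lemma A.1 with $b$ replacing $b_0$, following the Taksar--Zhou template that the authors explicitly defer to. First I would observe that the boundary conditions $g'(x)=1$ and $g''(x)=0$ on $[b,\infty)$ combined with $g\in C^1(\mathbb{R}_+)$ force $g(x)=g(b)+(x-b)$ there, so the real task is to build $g$ on $[0,b]$ solving the HJB pointwise, with $g(0)=0$ and smooth-fit $g'(b)=1$, $g''(b^-)=0$.

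Second, I would perform the pointwise maximization of the bracket $\tfrac{1}{2}\sigma^{2}a^{2}g''(x)+(\mu-(1-a)\lambda)g'(x)-cg(x)$ over $a\in[0,1]$. This is a concave quadratic in $a$ as long as $g''<0$, with unconstrained interior maximizer $\widehat{a}(x)=-\lambda g'(x)/(\sigma^{2}g''(x))$. Wherever $\widehat{a}\ge 1$ the constrained optimum is $a^{*}=1$ and the HJB reduces to the linear ODE $\tfrac{1}{2}\sigma^{2}g''+\mu g'-cg=0$ with characteristic roots $\zeta_{1},\zeta_{2}$; wherever $\widehat{a}<1$ the constrained optimum is $a^{*}=\widehat{a}$ and substituting back into the HJB produces a first-order nonlinear ODE for $g'$ which can be inverted through the change of variable $z=g'$ to yield the implicit function $X(z)$ of the statement. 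In the case $\lambda\ge 2\mu$ one verifies $\widehat{a}\ge 1$ throughout $[0,b]$, so the linear regime governs the whole interior and $g$ is pinned by $g(0)=0$ together with $g'(b)=1$, producing $f_{1}(x,b)$. In the case $\mu<\lambda<2\mu$ a free boundary $m=m(b)\in(0,b)$ appears: on $(m,b]$ the linear regime gives $f_{4}$, on $[0,m]$ the nonlinear regime gives $f_{3}$, and the constants $C_{1}(b),\dots,C_{4}(b)$ and the level $z_{1}(b)$ are chosen to enforce $C^{1}$ smooth-pasting at $m$ simultaneously with the smooth-fit at $b$, via $X(z_{1})=m$.

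Third, once $g$ is constructed, part (ii) follows by substitution: on $[0,b]$ the HJB gives equality at $a^{*}$ and an upper bound for other $a$; on $[b,\infty)$ one plugs $g'=1$, $g''=0$ into $\mathcal{L}g$ and obtains $\max_{a}[\,(\mu-(1-a)\lambda)-cg(x)\,]=\mu-c(g(b)+x-b)$, which equals zero at $x=b$ by continuity with the left-hand HJB and decreases linearly in $x$, giving $\max_{a}\mathcal{L}g\le 0$. Part (iii) is then simply the reading off of $a^{*}$ from the maximization already carried out, yielding $A^{*}(x)=1$ when $\lambda\ge 2\mu$ and $A^{*}(x)=A(x,b)$ as in \eqref{35} otherwise. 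The main obstacle is the simultaneous smooth-pasting at $m(b)$ and smooth-fit at $b$ in the intermediate case $\mu<\lambda<2\mu$: the four constants and the value $z_{1}$ must solve a coupled implicit system consistent with $X(z_{1})=m$, and one must check that this system admits a unique solution with $m\in(0,b)$ and that the resulting $g$ is globally concave so the quadratic-in-$a$ analysis is valid. Since the construction is algebraically identical to Taksar and Zhou (1998) with $b_{0}$ replaced by $b$, the verification of uniqueness, concavity, and smooth-pasting transfers verbatim.
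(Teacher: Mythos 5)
The paper itself provides no proof of Lemma~A.2 --- it explicitly defers to Taksar and Zhou (1998) --- so the comparison is between your plan and the construction that reference would yield. Your overall decomposition (identify the $a^*=1$ regime and the interior-optimizer regime, solve each ODE, paste, verify) is the right skeleton, and parts~(ii) and (iii) by substitution are the standard moves.

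However, you have injected a condition that is not just unnecessary but \emph{false} here, and it happens to be the very point that distinguishes Lemma~A.2 from Lemma~A.1. You write that the task is to build $g$ on $[0,b]$ with ``$g(0)=0$ and smooth-fit $g'(b)=1$, $g''(b^-)=0$.'' The second-order smooth-fit condition $g''(b^-)=0$ is precisely what characterizes the free boundary $b_0$ in Lemma~A.1; once you impose $g(0)=0$ and $g'(b)=1$ the solution is already fully determined (e.g.\ in the case $\lambda\ge 2\mu$, $g(x)=C_0(b)(e^{\zeta_1 x}-e^{\zeta_2 x})$ with $C_0(b)=1/(\zeta_1 e^{\zeta_1 b}-\zeta_2 e^{\zeta_2 b})$), and $g''(b^-)=C_0(b)(\zeta_1^2 e^{\zeta_1 b}-\zeta_2^2 e^{\zeta_2 b})$ is then strictly positive for $b>b_0$, not zero. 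Adding $g''(b^-)=0$ would overdetermine the system and force $b=b_0$. This is exactly what the paper's closing remark flags: ``Since $b\neq b_0$, $g''(b)$ may not exist,'' i.e.\ $g''$ jumps down from a strictly positive value at $b^-$ to $0$ at $b^+$, which is why the hypothesis only asks $g\in C^1(\mathbb{R}_+)\cap C^2(\mathbb{R}_+\setminus\{b\})$ rather than $C^2$ as in Lemma~A.1.

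This error propagates into two further places in your sketch. First, your verification of part~(ii) for $x\ge b$ asserts that $\max_a\mathcal{L}g(x)\big|_{x=b}=0$ ``by continuity with the left-hand HJB.'' But the HJB at $b^-$ gives $\tfrac12\sigma^2 g''(b^-)+\mu-cg(b)=0$, so on $[b,\infty)$ one gets $\max_a\mathcal{L}g(b^+)=\mu-cg(b)=-\tfrac12\sigma^2 g''(b^-)<0$; the quantity does not vanish at $b$, and the correct argument uses the sign of $g''(b^-)$, which in turn relies on $b>b_0$. Second, your closing check ``that the resulting $g$ is globally concave so the quadratic-in-$a$ analysis is valid'' would fail: $g$ is convex on $(b_0,b)$ (there $g''>0$), and the maximizer is $a^*=1$ not because the interior critical point $\widehat a=-\lambda g'/(\sigma^2 g'')$ exceeds $1$ but because the bracket is increasing in $a$ when $g''\ge 0$ and $g'>0$. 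This also explains why (\ref{39}) only claims $g'(x)\ge 1$ for $x\ge b$, in contrast to (\ref{34}) which gives $f'(x)\ge 1$ for all $x\ge 0$. Your claim that the argument ``transfers verbatim'' from the $b_0$ case therefore misses the substantive content of Lemma~A.2: the whole point of the lemma is to handle the loss of second-order smooth fit and of global concavity when $b>b_0$.
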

\vskip 10pt \noindent As  direct consequences of Lemma A.1 and Lemma
A.2, we have the followings:
 \begin{CA}\label{corollary31}(i)
There exists a positive  constant $B=B(\lambda,\mu,c,\sigma )$,
which does not depend on $b $ and $x$, such that
\begin{eqnarray}\label{337}
\max\{|(A^*)'(x) |, |(A^*)''(x)    |\}\leq {B};
\end{eqnarray}
(ii) $1\geq A^*(x)\geq \min\{1,  \frac{2(\lambda-\mu )}{\lambda}
\}:=d>0$ for $x\geq 0$;\\ (iii) $A^*(x)$ is an increasing function
w.r.t. $x$.
\end{CA}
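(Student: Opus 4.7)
The plan is to extract everything from the HJB equation itself rather than from the explicit $b$-dependent expressions in Lemmas A.1 and A.2; this automatically keeps all constants uniform in $b$. First, the case $\lambda \geq 2\mu$ is immediate, since Lemma A.1(iii) and Lemma A.2(iii) give $A^*(x) \equiv 1$, so (i), (ii), (iii) are trivial. So I restrict to $\mu < \lambda < 2\mu$. On $\{x \geq m\}$ we have $A^*(x) = 1$ exactly, so the work is confined to $(0,m)$.

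On $(0,m)$ the maximizer is interior, so the first-order condition in the HJB gives
\[
\sigma^{2}A^{*}(x)\,f''(x) + \lambda f'(x) = 0.
\]
Substituting this into the HJB to eliminate $f''$ and simplifying yields the clean representation
\[
A^{*}(x) = d + \frac{2c}{\lambda}\,\frac{f(x)}{f'(x)}, \qquad d := \frac{2(\lambda-\mu)}{\lambda}.
\]
Part (ii) follows at once from $f(0)=0$, $f \geq 0$, $f'\geq 1$ (Lemma A.1(ii)), combined with $A^{*}(x) \leq 1$ by definition and $A^{*}\equiv 1$ on $[m,\infty)$. For (iii), the FOC forces $f''<0$ on $(0,m)$, hence $(f/f')' = 1 - f f''/(f')^{2} > 0$, so $A^{*}$ is strictly increasing on $(0,m)$; together with $A^{*}\equiv 1$ on $[m,\infty)$ and continuity at $m$ this gives global monotonicity. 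The same argument applied to $g$ in place of $f$ covers Lemma A.2.

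For part (i), differentiating the representation and using the FOC a second time to eliminate $f''/f'$ produces a first-order autonomous ODE for $A^{*}$ containing no explicit dependence on $b$ or on $x$:
\[
(A^{*})'(x) = \frac{2c}{\lambda} + \frac{\lambda}{\sigma^{2}}\Bigl(1 - \frac{d}{A^{*}(x)}\Bigr).
\]
Using the bounds $d \leq A^{*}(x) \leq 1$ from (ii), the right-hand side lies in the interval $\bigl[\,2c/\lambda,\; 2c/\lambda + \lambda(1-d)/\sigma^{2}\,\bigr]$, which depends only on $(\lambda,\mu,c,\sigma)$. Differentiating once more gives
\[
(A^{*})''(x) = \frac{\lambda d}{\sigma^{2}(A^{*}(x))^{2}}\,(A^{*})'(x),
\]
and this is bounded by $\lambda/(\sigma^{2}d)$ times the previous bound. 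On $[m,\infty)$ both derivatives vanish, and at $x=m$ the derivatives are interpreted as one-sided limits from the left. Taking $B$ to be the maximum of the two uniform bounds above completes (i). The main bookkeeping point — and the reason for arguing via the autonomous ODE rather than the explicit expressions in Lemma A.1 — is that both the ODE and the range $[d,1]$ of $A^{*}$ depend only on $(\lambda,\mu,c,\sigma)$, so the constant $B$ is manifestly independent of $b$ and $x$ as claimed.
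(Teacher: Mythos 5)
Your proposal is correct, and it proceeds by a route that is genuinely different from what the paper invokes: the paper states the corollary without proof as a ``direct consequence'' of Lemmas A.1 and A.2, meaning one is implicitly meant to read off the bounds from the explicit, $b$-dependent formula $A^*(x)=-\tfrac{\lambda}{\sigma^2}\,X^{-1}(x)\,X'(X^{-1}(x))$ on $(0,m)$ together with the transcendental expressions for $C_3(b)$, $C_4(b)$, $z_1(b)$ and $m(b)$ --- a nontrivial bookkeeping exercise precisely because one must then argue that all those $b$-dependencies cancel in the final bound. You instead bypass the explicit solution entirely: combining the first-order condition $\sigma^2 A^* f''+\lambda f'=0$ with the HJB identity gives the representation $A^*=d+\tfrac{2c}{\lambda}\,f/f'$, from which (ii), (iii), and the autonomous ODE $(A^*)'=\tfrac{2c}{\lambda}+\tfrac{\lambda}{\sigma^2}\bigl(1-\tfrac{d}{A^*}\bigr)$ follow; since this ODE and the range $[d,1]$ involve only $(\lambda,\mu,c,\sigma)$, the uniformity in $b$ and $x$ is manifest rather than something to be verified by cancellation. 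This is cleaner and more robust. The only points worth making explicit (they are true but currently implicit) are: the maximizer really is interior on $(0,m)$, which is part of what Lemma A.1(iii) / A.2(iii) supply and is what licenses the FOC; $f''<0$ there, which then follows from the FOC together with $A^*>0$, $f'\ge 1>0$; and at $x=m$ the function $A^*$ generically has a kink (left derivative $\tfrac{2c}{\lambda}+\tfrac{\lambda(1-d)}{\sigma^2}>0$, right derivative $0$), so the bound in (i) is on the one-sided derivatives on $(0,m)$ and $(m,b)$ --- which is exactly how the corollary is subsequently used in Remark~\ref{R41}.
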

\begin{CA}\label{corollary32}
 For $b\geq b_0$, we have $\frac{\partial}{\partial b}g(b,x)\leq 0$
for $x\geq 0$.
\end{CA}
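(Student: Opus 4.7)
I would combine the stochastic-control interpretation of $g$ with a set-inclusion argument on admissible policies.

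First, invoke part (ii) of Theorem \ref{theorem61}: for every $b \geq b_0$ one has $g(x,b) = V(x,b) = \sup_{\pi \in \Pi_b} J(x,\pi)$. Next, straight from the definition $\Pi_b = \{\pi \in \Pi : \int_0^\infty I_{\{R^\pi_s < b\}}\, dL^\pi_s = 0\}$ recalled in Section 2, one has the nesting $\Pi_{b_2} \subset \Pi_{b_1}$ whenever $b_1 \leq b_2$, since any policy that pays no dividends while $R < b_2$ certainly pays none while $R < b_1$. Taking suprema over these nested sets yields $V(x,b_2) \leq V(x,b_1)$, so $b \mapsto g(x,b)$ is non-increasing on $[b_0,\infty)$.

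Once monotonicity is established, I would extract the sign of the derivative by reading off the $C^1$-dependence of $g(x,b)$ on $b$ directly from the explicit piecewise formulas $f_3, f_4, f_5$ in Lemma A.\ref{theorem32}: the constants $C_i(b)$, the free boundary $m(b)$, and the barrier $b$ itself are all smooth functions of $b$ on $[b_0,\infty)$. A non-increasing $C^1$ function has non-positive derivative, so $\partial g(x,b)/\partial b \leq 0$ for all $x \geq 0$.

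The main obstacle I anticipate is justifying smoothness of $m(b)$ and of the matching constants across the three pieces of $g$, since $m(b)$ is determined implicitly through $X(\cdot)$. I would handle this via the implicit function theorem applied at $z = z_1(b)$, using that the bound $A^*(x) \geq d > 0$ from Corollary A.\ref{corollary31} translates into a non-degeneracy condition on $X'$ at the relevant point. As a safeguard against any perceived circularity between Theorem \ref{theorem61} and this corollary, one can alternatively differentiate the HJB of Lemma A.\ref{theorem32} in $b$ and apply the maximum principle to the linear problem satisfied by $h(x,b) := \partial g(x,b)/\partial b$, exploiting the super-contact condition $g''(b_0;b_0) = 0$ at the critical barrier together with the relation $g''(b^{-};b) = 2(c\,g(b,b) - \mu)/\sigma^2$ read off from the HJB at $x = b^{-}$ (where $A^*(b) = 1$ since $b > b_0 > m$); the boundary contribution at $x = b$ comes from differentiating $g'(b;b) = 1$, giving $h_x(b;b) = -g''(b^{-};b)$, whose sign is the key ingredient for the maximum-principle conclusion.
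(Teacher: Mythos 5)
Your main argument is correct and takes a genuinely different route from the paper. The paper presents Corollary A.\ref{corollary32} as a ``direct consequence of Lemma A.\ref{theorem31} and Lemma A.\ref{theorem32},'' i.e.\ as a fact to be read off analytically from the explicit piecewise formulas (\ref{37})--(\ref{38}), with the computation omitted and deferred to Taksar and Zhou; that route requires differentiating the constants $C_i(b)$, $z_1(b)$ and the free boundary $m(b)$ in $b$ and tracking signs piece by piece. You instead invoke the identity $g(x,b)=V(x,b)=\sup_{\pi\in\Pi_b}J(x,\pi)$ from Theorem \ref{theorem61}(ii) together with the nesting $\Pi_{b_2}\subset\Pi_{b_1}$ for $b_0\le b_1\le b_2$, which the paper itself records just after (\ref{21}); this gives monotonicity of $b\mapsto g(x,b)$ in one line, and then smoothness in $b$ (which the explicit formulas do supply, modulo the implicit-function-theorem point you raise about $m(b)$) converts ``non-increasing'' into $\partial g/\partial b\le 0$. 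Your worry about circularity is the right thing to check, but it resolves in your favor: the proof of Theorem \ref{theorem61} uses parts (ii)--(iii) of Lemmas A.\ref{theorem31} and A.\ref{theorem32} but never Corollary A.\ref{corollary32}, so nothing loops; indeed your argument shows that once Theorem \ref{theorem61} is available, the monotonicity content of the corollary is essentially free and its only independent content is $b$-differentiability of $g$. The trade-off is that the paper's intended analytic proof is self-contained within the appendix, whereas yours imports Theorem \ref{theorem61} from the body of the paper. One caution on your backup PDE/maximum-principle sketch: the key boundary sign $h_x(b;b)=-g''(b^-;b)\le 0$ requires $g(b,b)\ge\mu/c$, which holds with equality at $b=b_0$ (super-contact) but is not automatic for $b>b_0$ and would need its own argument; so the set-inclusion proof is the one to lead with.
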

\vskip 10pt \noindent{\bf Acknowledgements.} This work is supported
by Projects 11071136 and 10771114 of NSFC, Project 20060003001 of
SRFDP, the SRF for ROCS, SEM  and the Korea Foundation for Advanced
Studies. We would like to thank the institutions for the generous
financial support. Special thanks also go to the participants of the
seminar stochastic analysis and finance at Tsinghua University for
their feedbacks and useful conversations. \noindent
 \setcounter{equation}{0}

\end{document}